\newtheorem{theorem}{Theorem}[section]
\begin{document}
\title{Rollover Preventive Force Synthesis at Active Suspensions in a Vehicle Performing a Severe Maneuver with Wheels Lifted off}
\author{Nigam Chandra Parida, \thanks{SERC, Indian Institute of Science, Bangalore 560012, India;  email: {\it ncparida@gmail.com}} 
 Soumyendu Raha \thanks{SERC, Indian Institute of Science, Bangalore 560012, India;  email: {\it raha@serc.iisc.in}
} and Anand Ramani \thanks{Visitor to SERC, Indian Institute of Science, Bangalore; email: {\it ramani.anand@gmail.com}} }

 \maketitle
\begin{abstract}
Among the intelligent safety technologies for road vehicles, active suspensions controlled by 
embedded computing elements for preventing rollover have received a lot of attention. 
The existing models for synthesizing and allocating forces in such suspensions are conservatively based on
the constraint that no wheels lift off the ground. However, in practice, smart/active suspensions are more 
necessary in the situation where the wheels have just lifted off the ground. The difficulty in computing control in the last
situation is that the problem requires satisfying disjunctive constraints on the dynamics. 
To the authors\textquoteright\,knowledge, no efficient solution method is available for the  simulation of dynamics
with disjunctive constraints and thus hardware realizable and accurate force allocation in an active suspension tends to be 
a difficulty. In this work we give an algorithm for and simulate numerical solutions of the force allocation problem as an
optimal control problem constrained by dynamics with disjunctive constraints. 
In particular we study the allocation and synthesis of time-dependent active suspension forces 
in terms of sensor output data  in order to stabilize the roll motion of the road vehicle. 
An equivalent constraint in the form of a convex combination (hull) is proposed to 
satisfy the disjunctive constraints. The validated numerical simulations show that it is possible to allocate
and synthesize control forces at the active suspensions from sensor output data such that the forces stabilize
the roll moment of the vehicle with its wheels just lifted off the ground during arbitrary fish-hook maneuvers.
\end{abstract}
\section{Introduction}
Traditionally forces allocated in intelligent suspension systems of a road vehicle to prevent its rollover during a severe maneuver 
have been based on the conservative constraint that 
none of its wheels  has lifted off the road surface (see, for example, models in \cite{rajamaniBook} and \cite{ramani}; see \cite{ramani} for a detailed 
recent literature review which we omit here to avoid repetition). However, this condition may not be practical in an 
actual situation where the vehicle does a severe fish-hook maneuver to avoid a sudden obstacle around a corner or on a  relatively tight 
curve, when there is no sufficient space and time left for slowing down. Such maneuvers are well known
to produce large yaw rates that induce rollover by lifting the wheels on one side of the vehicle off the road surface. 
The more severe problem of stabilization when
wheels have lifted off yields disjunctive constraints (i.e., either sufficient anti-roll moment to the left or to the right, depending
on which wheels are lifted off, must be available) on the vehicle dynamics. It then 
becomes necessary to solve a disjunctively constrained dynamical optimization problem to obtain the stabilizing forces 
that must be synthesized in the suspensions. 
In this work we formulate and solve a disjunctively constrained dynamical optimization problem and in the process, find 
forces in the suspensions that would assist in preventing the  rollover of a vehicle in the more severe situation of wheels
just lifting off the ground.
\par
Disjunctively constrained dynamics has not been studied often, although 
general disjunctive programming with nonlinear algebraic constraints have
received sufficient attention in recent years. A review of methods of 
handling disjunctive constraints in non-linear optimization problems where constraints do not include dynamics, 
can be found in \cite{grossmanReview} and in Part II of \cite{imaminlp}. 
\par
In computation schemes for collision avoidance (e.g., intelligent transportation system involving many vehicles) \cite{ak1, ak2, abi}, 
disjunctive constraints on the dynamics are converted into more conservative negated conjunctive constraints on the critical section 
in which collision is to be avoided and the schemes use large numbers and Kronecker deltas.  
A drawback of this approach (called ``big-M constraints'' in \cite{grossmanReview}) method 
is that the computational relaxation affected is often weak \cite{grossmanReview} resulting in failure of the 
disjunctive program. In the rollover prevention problem, it is more safety critical to 
satisfy the disjunctive constraints tightly.
\par
Thus  a general but efficient approach must be developed for computing the force allocation in suspensions correctly 
in the context of the rollover prevention problem. A convex hull (outlined in  \cite{grossmanReview} and
 in Part II of \cite{imaminlp}) of the the functions which enter the disjunctive constraints 
is used in this work to compute the effective force allocations in the active suspensions of the vehicle.
\par
The paper is organized as follows. At first the mathematical model of a road vehicle undergoing a severe maneuver 
along with different constraints is described, followed  by a method to handle the disjunctive constraints
that allow the wheels of the vehicle being lifted off the road surface. Section \ref{sec:GAlpha} describes 
the discretization of the disjunctively constrained dynamics. Section \ref{SecTrans} outlines the direct transcription method of
solving the resulting optimal control problem with disjunctive dynamics as constraint. 
Comparison of the disjunctive dynamics approach with the more conservative conjunctive constraints 
is illustrated with numerical results in section \ref{sec:Numerical}.
The rollover preventive forces with wheels lifted off are computed as sensor adapted controls in section 
\ref{sec:Numerical} to show that it is possible to synthesize the desired controls in terms of yaw rate using the present disjunctive dynamics model.
\section{Main Results}
The main results of this work are: 
(a) disjunctive dynamics constrained model of  rollover of a road vehicle undergoing a severe maneuver in which wheels on one side are allowed 
to be lifted off  the ground and (b) computation, using this model, to find 
the control forces that should be synthesized in an active suspension system in order to stabilize the  roll of the vehicle even when
wheels on one side have just lifted off the ground. The control forces are shown to be obtained in terms of the yaw rate sensor output.
\par
The maneuver indicated by figure  \ref{Fig:SteeringFun} is a fish-hook maneuver in which wheels are allowed to
be lifted off the ground and the control forces in the active suspension are generated based on yaw rate sensor output
to stabilize the vehicle. It is found  that the synthesized control forces make the vertical reactions on the right wheels zero during 
the course of the maneuver in which wheels lift off, and the resulting roll moment is negative, 
indicating that the rollover tendency of the vehicle with wheels lifted-off is neutralized.
The numerical solutions show that  satisfying the disjunctive constraints is key to computing the anti-roll controls.
\section{Mathematical and Computational Preliminaries}
In this section we layout the mathematical and computational tools and approach that are the bases of our
vehicle rollover model with disjunctively constrained dynamics and subsequent force synthesis.
\subsection{Dynamic Optimization Problem}
We approach the present control synthesis problem in the following dynamic optimization form:
\begin{subequations}
\begin{eqnarray}
&& \min ~~\int^{t_f}_{t_0} L \left(x,u, t \right) dt  ~~ \mbox{(objective function)} \label{optima1}\\
&& \mbox{subject to} \nonumber \\
&& \dot{x}=f(x,u,t)~~~\mbox{(dynamics)} \label{dyn} \\
&& 0 \le \phi(x,t)~~~\mbox{(inequality path constraints)} \label{cons} \\
&& 0 = \tilde{\phi}(x,t)~~\mbox{(equality path constraints)} \label{econs} \\
&&  \bigvee_{i=1}^{m_d} (0 \le \varphi_i) ~ \mbox{(inclusive disjunction on constraints)} \label{disjuncs}\\
&& b_{upper} \ge {\begin{pmatrix} x \\ u \end{pmatrix}} \ge b_{lower} ~\mbox{(bounds)}~\mbox{and} \label{bounds} \\
 && x(0), u(0) ~\mbox{are consistent initial conditions} \label{init}
\end{eqnarray}
\label{probdef}
\end{subequations}
where $x \in {\mathbb R}^m$ denotes the differential state variables, 
$f: {\mathbb R}^m \times {\mathbb R}^{q_u}\times {\mathbb R}\rightarrow {\mathbb R}^m $,
$u \in {\mathbb R}^{q_u} $ is the vector of controls and algebraic 
state variables; 
$\phi: {\mathbb R}^{m} \times {\mathbb R}  \rightarrow {\mathbb R}^{(q_e)} $ is the vector of inequality path constraints; 
$\varphi_i: {\mathbb R}^{m} \times {\mathbb R}  \rightarrow {\mathbb R}^{q_d},~i=1 \cdots m_d$ are the constraints over
which disjunction is taken, $\tilde{\phi}:{\mathbb R}^{m} \times {\mathbb R}  \rightarrow {\mathbb R}^{q_r} $ 
is the vector of equality path constraints and $t \in [t_0,t_f] \subset {\mathbb R}$ with $t_f > t_0$.  In the above $q_e, q_d, q_r$ are such that
no more than $q_u$ constraints are active at any given time in the simulation interval.
\subsection{Disjunctive Constraints as Convex Hull}\label{SecEO} 
Disjunctive constraints can be incorporated as convex constraints. This is equivalent to representing the disjunction  as
a convex hull of the constraints entering the disjunction. An inclusive disjunction or inclusive logical or over 
the functions $f_i: {\mathbb R}^n \to {\mathbb R},~ i \in \{1, \cdots, m\}$ can be represented
as follows:
\begin{eqnarray}
&& \bigvee_{i=1}^{m_d} (f_i(x) \le 0)  \Leftrightarrow \sum_{i=1}^{m_d} \lambda_i f_i(x) \le 0, ~\mbox{such that} \nonumber\\
&& \sum_{i=1}^{m_d} \lambda_i = 1,~\lambda_i \ge 0 ~\forall~ i \in \{1,\cdots,m_d\} 
\label{cvior}
\end{eqnarray}
 From the construction of (\ref{cvior}) it is clear that 
 $\{\lambda_i |   \sum_{i=1}^{m_d} \lambda_i = 1,~\lambda_i \ge 0 ~\forall~ i \in \{1,\cdots,m_d\} \}$
 can be found if at least one of the $f$'s is non-positive. 
 If $f_i(x) \ge 0, \forall~ i \in \{1,\cdots,m_d\}$, i.e., 
 all of the functions, over which the inclusive disjunction is specified, 
 are positive, then no $\lambda$'s can be found and the inclusive disjunction is correctly indicated as infeasible. 
 \par
 It follows that exclusive disjunction or exclusive-or between two constraints, i.e., 
 $(f_1 \le 0) \oplus (f_2 \le 0)$ can be represented by the following constraints:
 $\lambda_1 f_1 + \lambda_2 f_2 \le 0,~\pi_1 f_1 + \pi_2 f_2 \ge 0,~\lambda_1 + \lambda_2 = 1, ~ \pi_1 + \pi_2 = 1,
 \pi_i \ge 0,~\lambda_i \ge 0,~i=1,2$. 
 \par
In the context of the constrained dynamics, the inclusive disjunction on constraints at some $t \in I\subseteq [t_0,t_f]$ implies that 
at least one of the constraints (over which disjunction is taken) is satisfied. 
Consider $f_1(t,X(t))$ and $f_2(t,X(t))$ to be two constraint functions. 
In each sub-interval $I$ of $[t_0,t_f]$ if $f_1(t,X(t))> 0$ then $f_2(t,X(t))\leq 0$ must hold and vice versa.
The inclusive disjunctive constraint is also satisfied, when, as appropriate, both $f_1$ and $f_2$ are non-positive.
We implement the above by satisfying $g(t,X(t)):= \lambda(t) f_1(t,\,X(t))+\left(1-\lambda(t)\right)f_2(t,\,X(t))\leq 0$ 
such that $\lambda(t) \in [0,1]$.
\begin{theorem}
Consider the inclusive disjunctive constraint at a $t \in [t_0,t_f]$: $(f_1(t,X(t))\leq 0) \bigvee (f_2(t,X(t))\leq 0)$. 
If and only if $X(t)$ satisfies $g(t,X(t)):= \lambda(t) f_1(t,\,X(t))+\left(1-\lambda(t)\right)f_2(t,\,X(t))\leq 0$ for some 
$\lambda(t) \in [0,1]$, the inclusive disjunction is satisfied.
\end{theorem}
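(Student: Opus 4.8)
The plan is to establish the stated equivalence by proving the two implications separately, since the claim is a biconditional linking the existence of a feasible convex weight $\lambda(t)$ to satisfaction of the inclusive disjunction. The whole argument rests on one elementary fact about convex combinations: because the coefficients $\lambda$ and $1-\lambda$ are nonnegative and sum to one, a convex combination of two numbers cannot be strictly smaller than both, nor strictly larger than both. Throughout I would suppress the arguments $t$ and $X(t)$ and write $f_1, f_2, g$ for their values at the fixed instant $t$.

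For the direction that guarantees a feasible $\lambda$ exists whenever the disjunction holds, I would argue constructively. If $(f_1 \le 0) \vee (f_2 \le 0)$ holds, then at least one of $f_1 \le 0$ or $f_2 \le 0$ is true. In the first case take $\lambda = 1$, so that $g = f_1 \le 0$; in the second case take $\lambda = 0$, so that $g = f_2 \le 0$. In either case the chosen $\lambda$ lies in $[0,1]$ and yields $g \le 0$, so a feasible weight exists. Note that the endpoints of $[0,1]$ already suffice here.

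For the converse, that existence of a feasible $\lambda$ forces the disjunction, I would argue by contraposition. Suppose the disjunction fails, i.e.\ $f_1 > 0$ and $f_2 > 0$ simultaneously. For any $\lambda \in [0,1]$ the coefficients $\lambda$ and $1-\lambda$ are nonnegative and sum to one, so at least one of them is strictly positive; hence $g = \lambda f_1 + (1-\lambda) f_2$ is a sum of nonnegative terms of which at least one is strictly positive, giving $g > 0$. Thus no $\lambda \in [0,1]$ can satisfy $g \le 0$, which is exactly the contrapositive of the necessity claim.

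The only point requiring care, and the reason the restriction $\lambda \in [0,1]$ cannot be dropped, is the treatment of the endpoints $\lambda = 0$ and $\lambda = 1$ in the contraposition. Since the two weights sum identically to one they cannot vanish simultaneously, so strict positivity of both $f_1$ and $f_2$ necessarily propagates to $g$. Were the weights instead merely nonnegative with their sum unconstrained, they could both be zero and $g$ would vanish even with $f_1, f_2 > 0$, invalidating the argument. Checking the claim separately at the two endpoints and on the open interval $(0,1)$ is the small but indispensable step; once it is in place, combining the two implications yields the asserted equivalence.
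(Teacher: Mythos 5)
Your proof is correct, and the forward direction (choosing the endpoint weights $\lambda=1$ or $\lambda=0$) is the same constructive idea the paper uses, just spelled out more explicitly. Where you genuinely diverge is in the converse: the paper argues directly, asserting that $g\le 0$ forces ``either $\lambda f_1\le 0$ or $(1-\lambda)f_2\le 0$, or both $f_1\le 0$ and $f_2\le 0$,'' and then declares the claim follows. As written, that chain is loose, because $\lambda f_1\le 0$ with $\lambda=0$ says nothing about the sign of $f_1$, so the final step from the displayed alternatives back to the disjunction is not fully justified. Your contrapositive version closes exactly this gap: assuming $f_1>0$ and $f_2>0$, you observe that $\lambda$ and $1-\lambda$ are nonnegative and sum to one, hence cannot both vanish, so $g=\lambda f_1+(1-\lambda)f_2>0$ for every $\lambda\in[0,1]$. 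That single observation about the weights not vanishing simultaneously is precisely what the paper's direct argument leaves implicit, and your remark that the normalization $\lambda+(1-\lambda)=1$ is indispensable (a merely nonnegative, unnormalized pair of weights could both be zero) is a worthwhile addition. In short: same skeleton, but your treatment of the necessity direction is the more airtight of the two.
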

\begin{proof}
Suppose at least one of $(f_1(t,X(t))\leq 0)$ and $(f_2(t,X(t))\leq 0)$ is true at $t \in [t_0,t_f]$. 
By construction, we can choose a $\lambda(t)\in[0,\,1]$ for $t\in[t_0,\,t_f]$
so that $g(t,\,X(t)) \le 0$. A $\lambda(t)$ cannot be found only when both $f_1$ and $f_2$ are positive.
The converse is as follows.  Let $g(t,X(t))\leq 0$ hold for some $\lambda(t) \in [0,1]$.  Since the sum of two positive reals 
can not be negative, $g(t,X(t))\leq 0$ implies that 
\begin{eqnarray}
&& \mbox{either}~\lambda(t) f_1(t,\hat{X}(t))\leq 0 ~ \mbox{or} 
\left(1-\lambda(t)\right)f_2(t,\hat{X}(t))\leq 0, ~ \mbox{or}, \nonumber \\ 
&& \mbox{both $f_1(t,\hat{X}(t))\leq 0$ and $f_2(t,\hat{X}(t))\leq 0$}.\label{EqEo2} 
\end{eqnarray}
Hence the claim follows.
\end{proof}
We note that the variable $\lambda(t)$ used to handle the inclusive disjunctive constraints 
does not have a unique solution. However, from the computational
point of view $\lambda(t)$ is treated as an algebraic variable (in the context of a differential-algebraic
equation model of the vehicle rollover dynamics). Equations (\ref{EO1}) and (\ref{EO2}) in section \ref{ConsSec} are based on this approach.
\subsection{Numerical Algorithm for Solving the Dynamic Optimization Problem}
In this section we give a step-wise algorithm for solving  the dynamic optimization formulation (\ref{probdef}) of the control problem.
\begin{enumerate}[\hspace{0.5cm}\bfseries Step 1]
\item The  ODE (converted to canonical first order) describing the dynamics over the entire time interval of the maneuver is discretized.
\item As described in section \ref{SecEO}, the convex hull equivalent of the disjunctive constraints is appended to the discretized dynamics.
\item The finite dimensional dynamic optimization problem thus formed is solved by the direct transcription method described in 
section \ref{SecTrans} using a nonlinear programming solver. 
\end{enumerate}
\subsubsection{The $\alpha$-Method Discretization of the Dynamics}\label{sec:GAlpha}
For definiteness, we use the $\alpha$-method to disceretize the dynamics in our numerical method 
(cf. \cite{GAlpha, Multi} for first order DAEs,  \cite{DaeAlpha} for second order DAEs). The method has the property of 
producing regularized (reduced condition number) constraint Jacobians and its DAE discretization is unconditionally stable (cf. \cite{GAlpha} for 
mathematical theory and computational properties when used in a direct transcription). These
two properties are useful because of the stiffness the disjunctive constraints produce by the switching action inherent in the 
disjunction. The method discretizes
the first order initial value problem $\dot x=f(x,t),\,g(x,u,t)=0,\,x(t_0)=x_0$ as  
\begin{subequations}
\begin{eqnarray}
x_{n+1}&=&x_n+\left(1-\frac{\beta}{\gamma}\right)h_nf(x_n,\,t_n)
        +\frac{\beta}{\gamma}h_nf(x_{n+1},t_{n+1})\nonumber\\
       & &+\left(\frac{1}{2}-\frac{\beta}{\gamma}\right)h_n^2a_n\\
a_{n+1}&=&\frac{1}{h_n\gamma}\left(f(x_{n+1},t_{n+1})-f(x_n,t_n)\right) 
+\left(1-\frac{1}{\gamma}\right)\\
0 & = & g(x_{n+1}, u_{n+1},t_{n+1})
\end{eqnarray}
\end{subequations}
where $h_n=t_{n+1}-t_n$ is the time step size, and $a_0$ is either given or calculated by 
$a_0=\frac{\mathrm{d}f}{\mathrm{d}t}$ at $t=0$.
 The parameters $\gamma$ and $\beta$ are computed as
 \begin{subequations}
\begin{eqnarray}
\gamma&=&\frac{2}{\rho+1}-\frac{1}{2} \\
\beta&=&\frac{1}{(\rho+1)^2}
\end{eqnarray}
\end{subequations}
where $\rho \in \left[0,\,1\right)$ is a user-selected variable. 
\subsubsection{Direct Transcription Method}\label{SecTrans}
Consider the following optimal control problem.
\begin{subequations}
\begin{eqnarray}
\min J&=&\int_{t_0}^{t_f}L(x,\,u,\,t)\label{GOCP1}\\
\textnormal{subject to}& &\nonumber\\
&& \dot x = f(x,\,u,\,t),\,x(t_0)=x_0\label{GOCP2}\\
&& g(x,\,u,\,t) = 0\label{GOCP3}\\
&& h(x,\,u,\,t) \leq 0\label{GOCP4}
\end{eqnarray}
\end{subequations}
where the initial value $u(t_0)=u_0$ may or may not be given. 
This is an infinite-dimensional continuous problem over $[t_0,t_f]$. We approximate the problem by a
finite-dimensional version by discretizing the dynamics over $[t_0, t_f] $ partitioned
as $t_0<t_1<\cdots<t_N=t_f$. The objective function 
$J$ is approximated by the trapezoidal rule whereas the dynamics is discretized 
by the $\alpha$-method and the constraints are required to be satisfied at each grid point. 
We assume that $u(t_0)=:u_0$ is either known or can be computed such that $x_0,\,u_0$ are consistent with the 
equality and inequality constraints (\ref{GOCP3}) and (\ref{GOCP4}).  The finite dimensional discretized problem is then written as 
\begin{subequations}
\begin{eqnarray}
&& \min \sum_{i=0}^{i=N-1}\frac{t_{i+1}-t_i}{2} \nonumber \\ &&
 \times \left(L(x_i,\,u_i,\,t_i)+L(x_{i+1},\,u_{i+1},\,t_{i+1})\right)\label{Trans1} \\
&& \mbox{subject to, for $n=0,\,1,\cdots, N-1$ and $h_n=t_{n+1}-t_n$,} \nonumber \\
&& x_{n+1}=x_n+\left(1-\frac{\beta}{\gamma}\right)h_nf(x_n,\,t_n)
        +\frac{\beta}{\gamma}h_nf(x_{n+1},t_{n+1})\nonumber\\
       &&+\left(\frac{1}{2}-\frac{\beta}{\gamma}\right)h_n^2a_n\label{Trans2}\\
&& a_{n+1}=\frac{1}{h_n\gamma}\left(f(x_{n+1},t_{n+1})-f(x_n,t_n)\right)\label{Trans3}\\
&& 0=g(x_{n+1},\,u_{n+1},\,t_{n+1})\label{Trans4}\\
&& 0\geq h(x_{n+1},\,u_{n+1},\,t_{n+1})\label{Trans5}\\
&& x(t_0)=x_0;~0=g(x_0,\,u_0,\,t_0); ~0\geq h(x_0,\,u_0,\,t_0)\label{Trans6}
\end{eqnarray}
\end{subequations} 
in which $a_0$ is also computed when not known.
The problem (\ref{Trans1}-\ref{Trans6}) is solved by a nonlinear programming (NLP) solver, 
such as, a sequential quadratic programming (SQP) method. 
For a more detailed description of the method, the reader is referred to \cite{GAlpha}.
\section{Mathematical Model of the Roll Stabilization}\label{MathModel}
The vehicle dynamics model (cf. \cite{ramani} for description) is described as a constraint in an optimal control problem. The model, with reference to figures \ref{vehmod1} and \ref{vehmod2} and with reference to the parameters described in the 
Appendix, is given as \begin{subequations} \begin{eqnarray}
&& \mbox{Minimize } J= \nonumber \\
&& \int_0^{t_f}\left\{\left[X\left(t\right)- \overline X\left(t\right)\right]^2 +
    \left[Y\left(t\right)- \overline Y\left(t\right)\right]^2\right\}\,\mathrm{d}t \qquad \label{Mobj}\\
&& \textnormal{subject to the following equations of motion:}\nonumber\\
&& M\ddot{X}=\sum_{i=1}^4\mu F_{Yi}\sin \left(\theta_Z+\delta_i\right)\label{EqM1}\\
&& M\ddot{Y}=\sum_{i=1}^4-\mu F_{Yi}\cos \left(\theta_Z+\delta_i\right)\label{EqM2}\\
&& I_{ZZ}\ddot\theta_Z=\sum_{i=1}^4\big(-\mu F_{Yi}\cos\delta_i r_{Xi} \nonumber\\ &&-
         \mu F_{Yi}\sin\delta_i r_{Yi}\big)\label{EqM3}\\
&& M\ddot{Z}=\sum_{i=1}^4F_{Zi}-Mg\label{EqM4}\\
&& I_{XX}\ddot\theta_X=\left(F_{Z1}-F_{Z2}+F_{Z3}-F_{Z4}\right)\frac{T}{2}\nonumber\\
                     & &+\left(F_{Z1}+F_{Z2}+F_{Z3}+F_{Z4}\right)Z\tan\theta_X\nonumber\\
                     & &+MZ\left(\ddot{Y}\cos\theta_Z-\ddot{X}\sin{\theta_Z}\right), \label{EqM5}\end{eqnarray}
\label{RollStabModel} \end{subequations}
to which we shall append the roll stabilization specific constraints in section \ref{ConsSec}. Other constraints are 
bounds from suspension travel limits, force limit, etc..
In model (\ref{RollStabModel}), the reference quantities $\overline X(t)$ and $\overline Y(t)$ are obtained 
by solving the state equilibrium equations (\ref{EqM1}-\ref{EqM3}) after setting 
$F_l=0,\,F_r=0,\,Z=0,\,\dot{Z}=0,\,\theta_X=0,\,\dot{\theta}_X=0$ and the objective function (\ref{Mobj}) 
makes the vehicle follow the reference path $\left(\overline X(t),\,\overline Y(t)\right)$ as closely as possible.
\begin{figure} 
 \centering
 \includegraphics[scale=0.33]{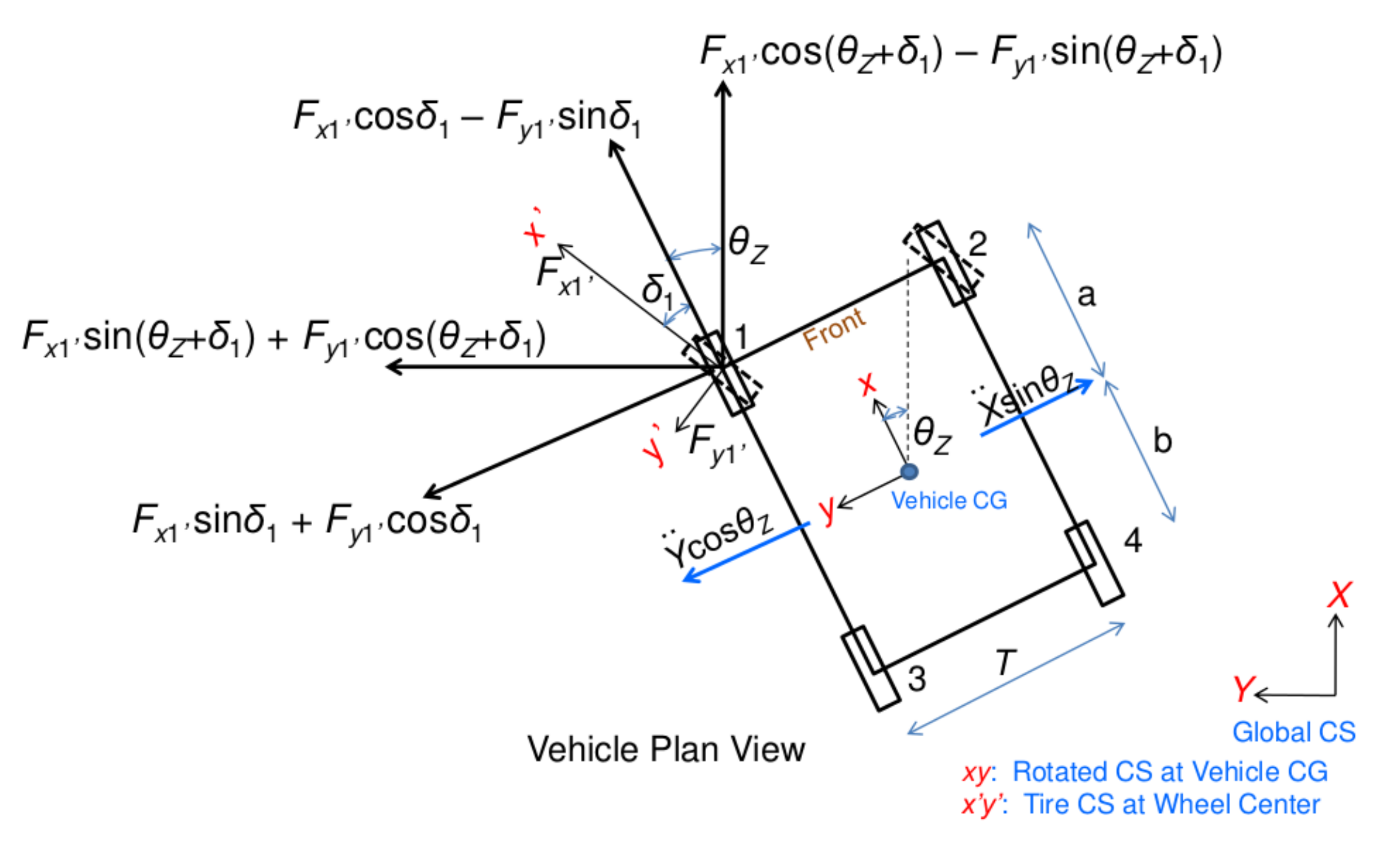}
\caption{Vehicle model (planar view)}\label{vehmod1}
 \centering
 \includegraphics[scale=0.34]{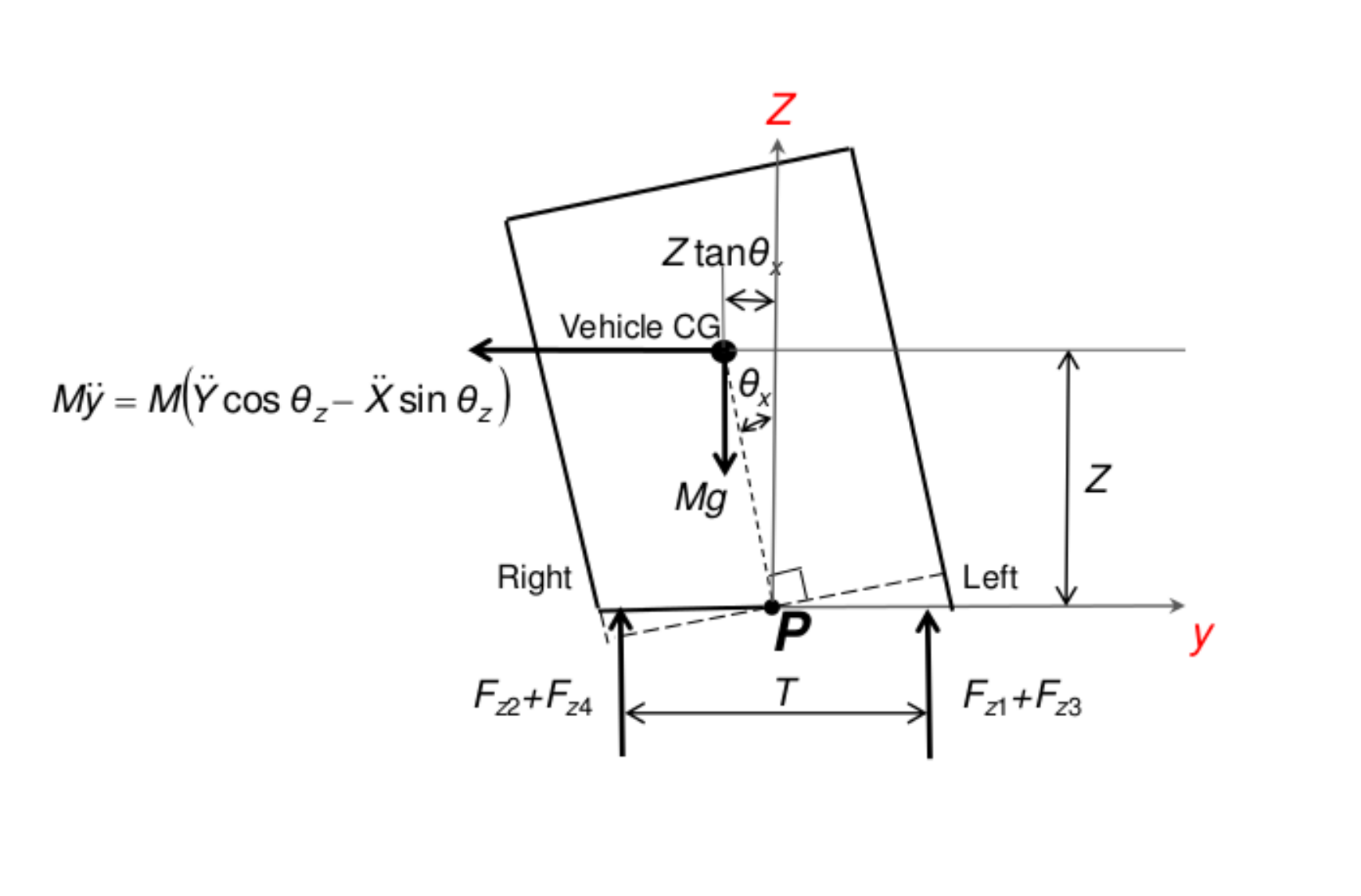}
\caption{Vehicle model (vertical view)}\label{vehmod2}
\end{figure}
The different parameters and constants (including those related to tire model), 
tire forces $F_{Yi}$ and wheel reaction forces $F_{Zi}$ $(i=1,\,2,\,3,\,4)$ 
are described in the following sub-section.
\subsection{Forces and parameters}
The wheel reaction forces $F_{Zi},\,i=1,\cdots,4$ are defined by the following formulae.
\begin{subequations}
\begin{eqnarray}
F_{Z1}&=&F_l+\frac{b}{2(a+b)}Mg+K\left\{Z_0-\left(Z+\frac{T}{2}\theta_X\right)\right\}-\nonumber \\ &&
        C\left(\dot Z+\frac{T}{2}\dot\theta_X\right)\\
F_{Z2}&=&F_r+\frac{b}{2(a+b)}Mg+K\left\{Z_0-\left(Z-\frac{T}{2}\theta_X\right)\right\}-\nonumber\\ &&
        C\left(\dot Z-\frac{T}{2}\dot\theta_X\right)\\
F_{Z3}&=&F_l+\frac{a}{2(a+b)}Mg+K\left\{Z_0-\left(Z+\frac{T}{2}\theta_X\right)\right\}-\nonumber\\ &&
        C\left(\dot Z+\frac{T}{2}\dot\theta_X\right)\\
F_{Z4}&=&F_r+\frac{a}{2(a+b)}Mg+K\left\{Z_0-\left(Z-\frac{T}{2}\theta_X\right)\right\}-\nonumber\\ &&
        C\left(\dot Z-\frac{T}{2}\dot\theta_X\right)
\end{eqnarray}
\end{subequations}
The tire forces $F_{Yi},\,i=1,\cdots,4$ are defined as 
\begin{equation} 
F_{Yi}=
\begin{cases}
0 & \text{if $F_{Zi}\leq 0$} \\
D\sin\left(C_T\arctan\left(B\phi\right)\right)\,&\text{otherwise}
\end{cases}
\end{equation}
where the parameters used in the definition of $F_{Yi}$ are calculated using the following formulae.
\begin{subequations}
\begin{eqnarray}
\phi&=&(1-E)(\alpha+\Delta S_h)\nonumber \\
           && +(E/B)\arctan(B(\alpha+\Delta S_h)) \\
D&=&a_1{F_{Zi}'}^2+a_2F_{Zi}'\\
B&=&\frac{a_3\sin\left(a_4\arctan\left(a_5F_{Zi}'\right)\right)}{C_TD}\\
E&=&a_6{F_{Zi}'}^2+a_7F_{Zi}'+a_8\\
\alpha&=&\frac{180}{\Pi}\bigg\{-\delta_i-  \nonumber \\ 
&& \arctan\bigg[\frac{\dot{X}\sin\theta_Z-\dot{Y}\cos\theta_Z-r_{Xi}\dot\theta_Z}
{\dot{X}\cos\theta_Z+\dot{Y}\sin\theta_Z-r_{Yi}\dot\theta_Z}\bigg]\bigg\}\\
F_{Zi}'&=&\frac{F_{Zi}}{1000}
\end{eqnarray}
\end{subequations}
A detailed description of the above tire model formulae is given in \cite{TyreModel} (also see \cite{pacejkabook}). 
The constant values used for numerical computations in this work  are given in the Appendix.
\subsection{Constraints on the Dynamics}\label{ConsSec}
\begin{enumerate}
\item\label{ConSTL} Suspension travel limits:
\begin{subequations}
\begin{eqnarray}
&& Z_{min}\leq Z+\frac{T}{2}\theta_X\leq Z_{max} ~ \mbox{and} \\
&& Z_{min}\leq Z-\frac{T}{2}\theta_X\leq Z_{max}. 
\end{eqnarray}
\end{subequations}
\item\label{ConFL} Controlling force limits: 
\begin{subequations}
\begin{eqnarray}
&& -F_{max}\leq F_l\leq F_{max}~ \mbox{and} \\
&& -F_{max}\leq F_r\leq F_{max}.
\end{eqnarray}
\end{subequations}
\item Anti-roll moment constraints (Inclusive Disjunctive Constraints): 
\begin{subequations}
\begin{eqnarray}
&& \bigg( -F_{Z1}-F_{Z3}\leq 0 \bigg)~ \bigvee ~ \nonumber \\
    &&  \bigg( \frac{\ddot Y \cos\theta_Z-\ddot X\sin\theta_Z}{g}-\frac{T}{2Z}\leq 0 \bigg) \label{EO1} \\
    &&  \mbox{and} \nonumber \\ 
&& \bigg( -F_{Z2}-F_{Z4}\leq 0  \bigg) ~ \bigvee ~ \nonumber \\
    && \bigg( -\frac{\ddot Y \cos\theta_Z-\ddot X\sin\theta_Z}{g}-\frac{T}{2Z}\leq 0 \bigg) \label{EO2}.
     \end{eqnarray}
\label{ConEO}     
\end{subequations}
\end{enumerate}
The disjunctive antil-roll constraints (\ref{ConEO}) are treated in the existing literature (e.g., see \cite{rajamaniBook, sharp,itspar}) 
in the more conservative (conjunctive) form:
\begin{equation}
-F_{Z1}-F_{Z3}\leq 0~ \mbox{and} -F_{Z2}-F_{Z4}\leq 0 \label{ConCons}
\end{equation}
which imply that the wheels do not lift off the ground. This treatment of constraints fails 
to generate any controller intervention when the wheels have lifted off.
Further, it also does not check whether the vehicle roll has become unstable 
(see rollover index analysis in section \ref{sec:ri}) and needs controller intervention at all. 
\section{Computation of the Control Forces}\label{sec:Numerical}
The switching of the disjunctive constraints, as expected, introduces stiffness in the dynamics resulting in 
a possibly large condition number in the constraint Jacobian of the NLP solver.
The $\alpha$-method when used as the discretization method in a transcription
scheme can affect regularization of the Jacobian (\cite{GAlpha}). 
Discretizations that do not affect a regularization, such as the backward Euler discretization, 
fail to converge to a solution in the NLP problem.
\par
Hence, the optimal control problem with its dynamics and disjunctive constraints 
as described in section \ref{MathModel} is numerically solved by the 
direct transcription method (cf. section \ref{SecTrans}) with $\alpha$-method discretization 
over the maneuver simulation interval $[t_0, t_f]$, which is partitioned into $N$ equally spaced grid points. 
The step size for the $\alpha$-method is then $h=\frac{t_f-t_0}{N-1}$. 
Various combinations of the constraints and of the initial guesses for the control profile are experimented with for 
obtaining the numerical solutions. The {\tt fmincon} \cite{Gilbert:2006:numerics} function of the 
MATLAB$\textsuperscript\textregistered$, a sequential quadratic 
programming  (SQP) routine, is used as an NLP solver. It is seen that  the $\alpha$-method discretization with $N\geq 121$ uniform 
time steps captures the required time resolution of the stiff dynamics of the maneuvers and produces convergent numerical solutions. 
\subsection{Numerical Solutions with Disjunctive Constraints} \label{subsec:NumEO} 
 \begin{figure}
\centering
\includegraphics[scale=0.18]{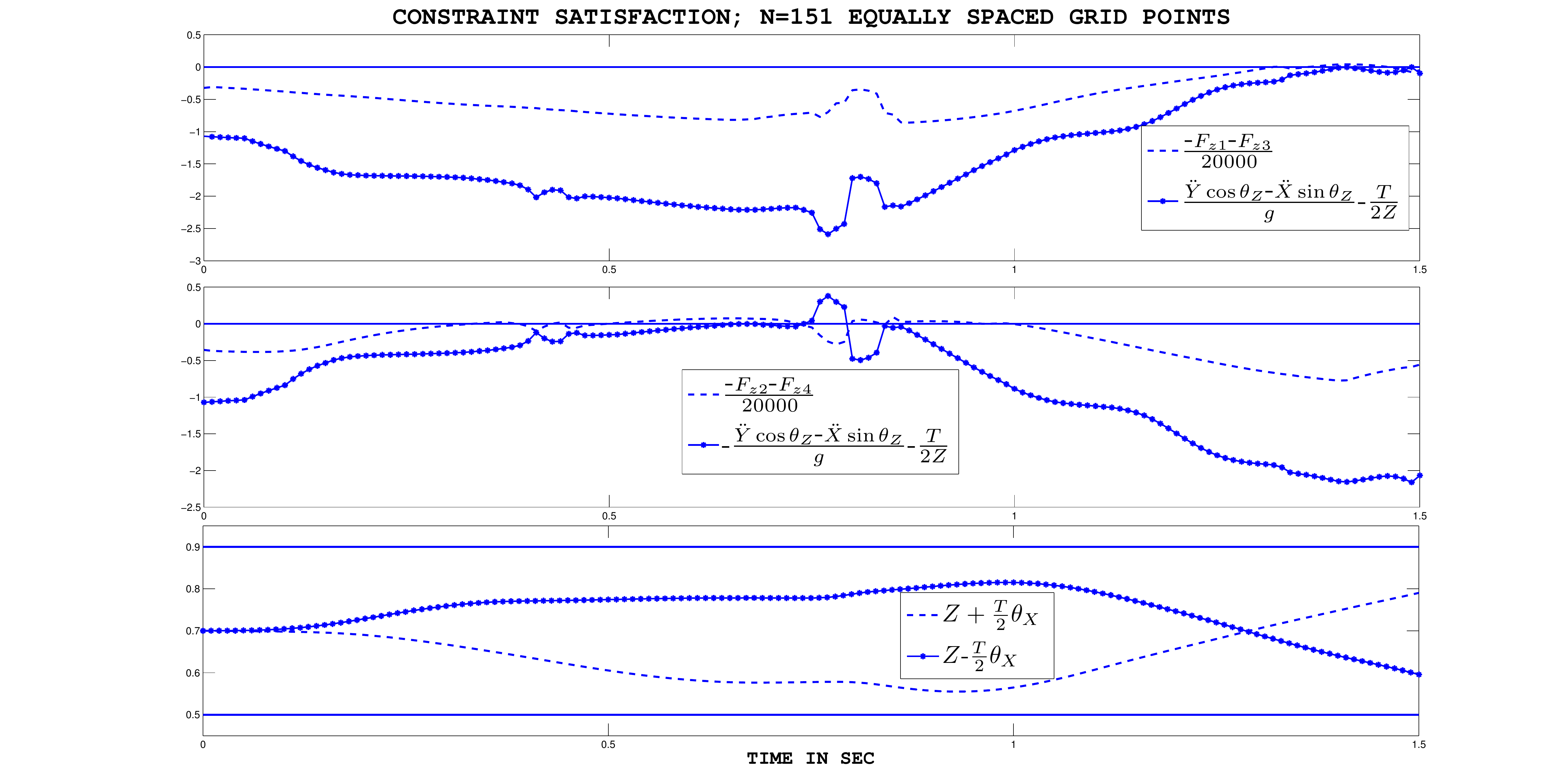}
\caption{Satisfaction of Disjunctive Constraints; N=151}\label{EO_MPE_IndFLFrCG1N151ConstraintFig}
\end{figure}
\begin{figure}
 \centering
 \includegraphics[scale=0.18]{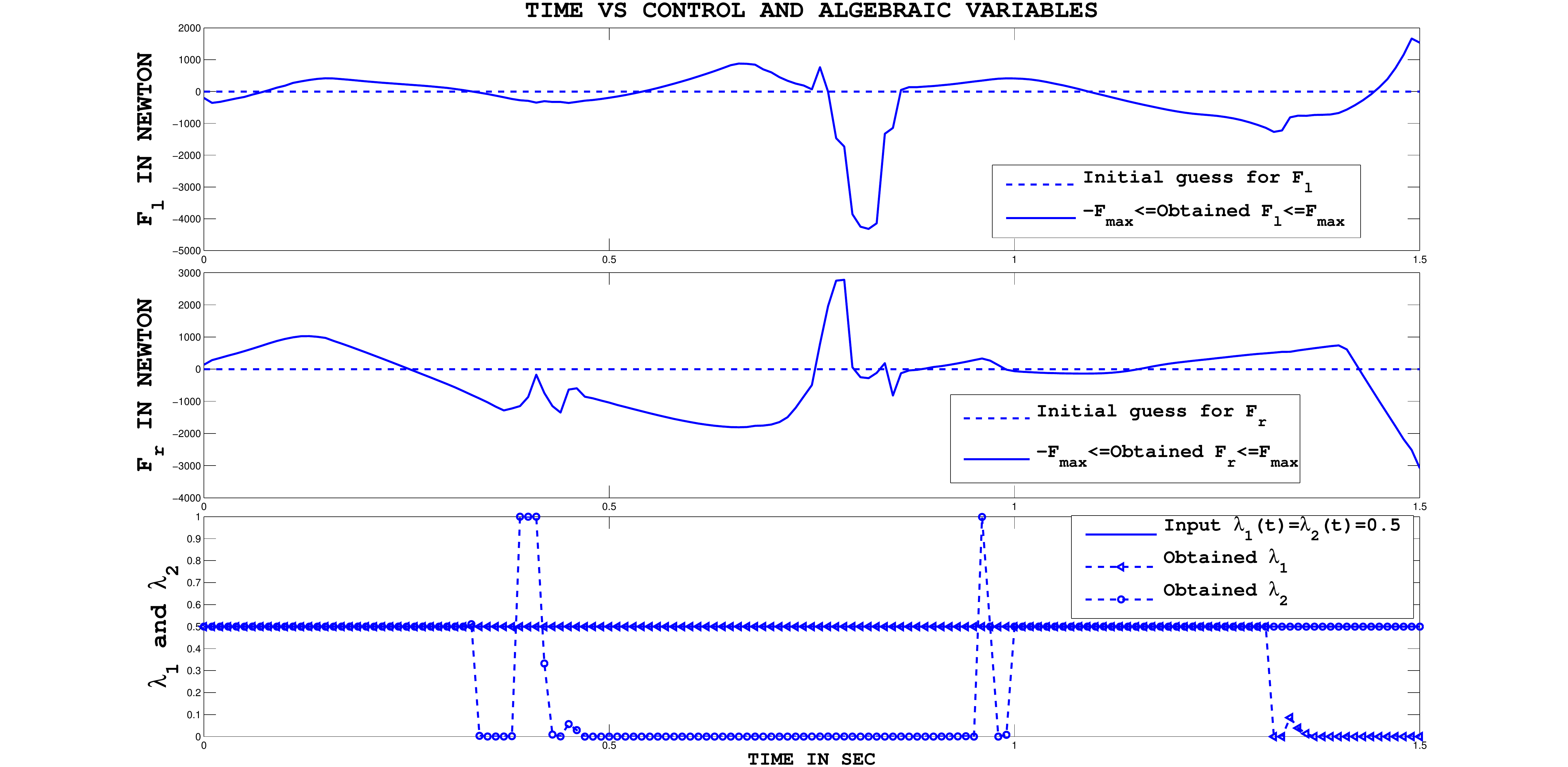}
\caption{Control and algebraic variables with Disjunctive Constraints and initial guess value of zero; $N=151$}\label{EO_MPE_IndFLFrCG1N151CtrlAlgFig}
\end{figure}
\begin{figure}
\centering
 \includegraphics[scale=0.18]{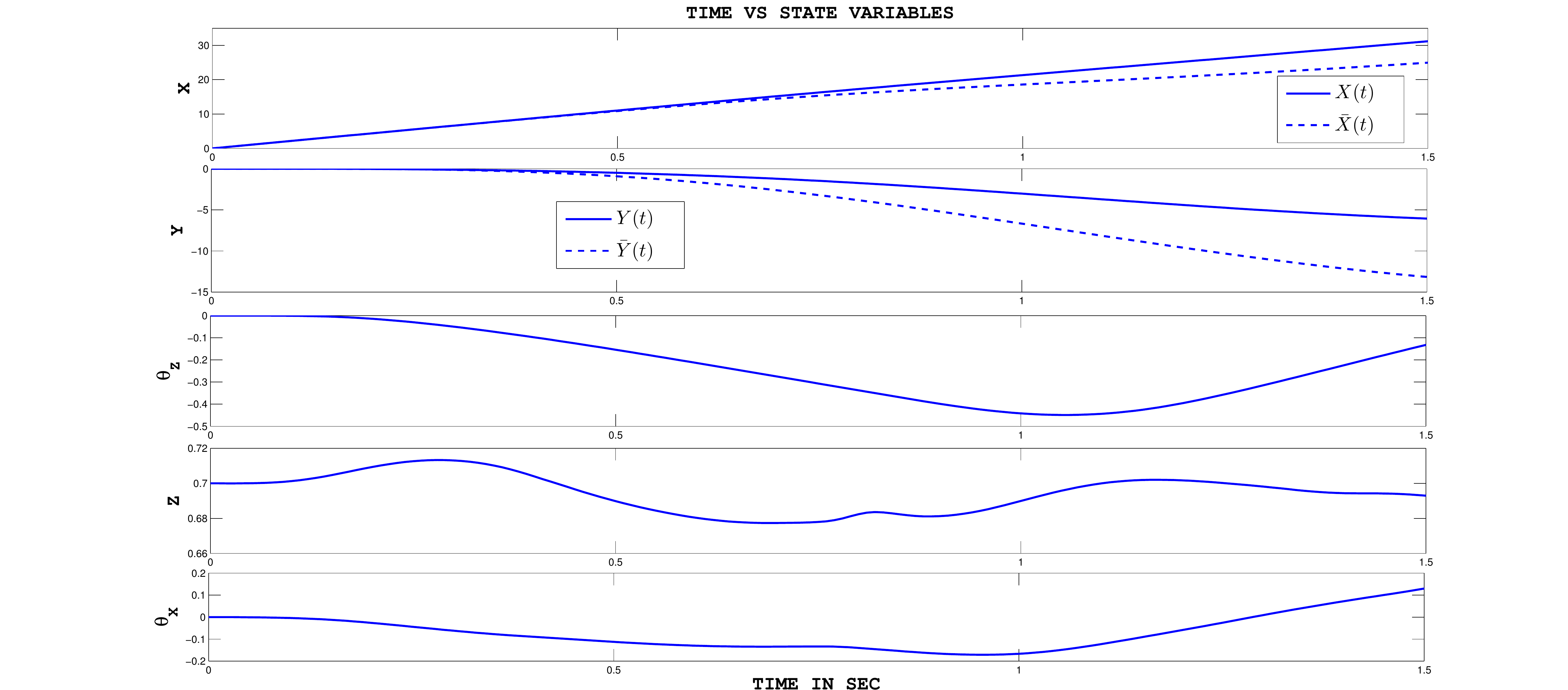}
\caption{State variables with Disjunctive Constraints; $N=151$}\label{EO_MPE_IndFLFrCG1N151StateFig}
 \end{figure}
A severe fishhook maneuver with steering input as in figure \ref{Fig:SteeringFun} and disjunctive constraints (i.e., wheels lifting off) are
 employed. Figures \ref{EO_MPE_IndFLFrCG1N151ConstraintFig}, 
\ref{EO_MPE_IndFLFrCG1N151CtrlAlgFig} and 
 \ref{EO_MPE_IndFLFrCG1N151StateFig} show the controls obtained from the simulation of the 
disjunctively constrained dynamic optimization problem with a partition of $N=151$ grid points 
 with the initial guess values of $F_l$ and $F_r$ being zero (i.e., inactive).
 Figure \ref{EO_MPE_IndFLFrCG1N151CtrlAlgFig} shows the evolution of the control forces needed to
 stabilize the vehicle. Using initial guess of control variables for direct transcription as 
$F_l(t)=F_r(t)=1000$ instead of zeros and a uniform grid with $N=121$ points, the dynamic 
optimization yields the results shown in 
figures \ref{EO_MPE_IndFLFrCG3N121ConstraintFig} and \ref{EO_MPE_IndFLFrCG3N121CtrlAlgFig}.  
Figure \ref{EO_MPE_IndFLFrCG3N121ConstraintFig} shows that the inclusive disjunctive constraints are satisfied. 
The time profile of the variables $X,\,Y,\,\theta_Z$ is similar to that
in figure \ref{EO_MPE_IndFLFrCG1N151StateFig} while $Z$ and $\theta_X$ shows minor differences with the zero initial guess.
However, the obtained control forces are very different from the ones in figure \ref{EO_MPE_IndFLFrCG1N151CtrlAlgFig} 
implying that there are possibly several local optima.
 \begin{figure} 
\centering
\includegraphics[scale=0.185]{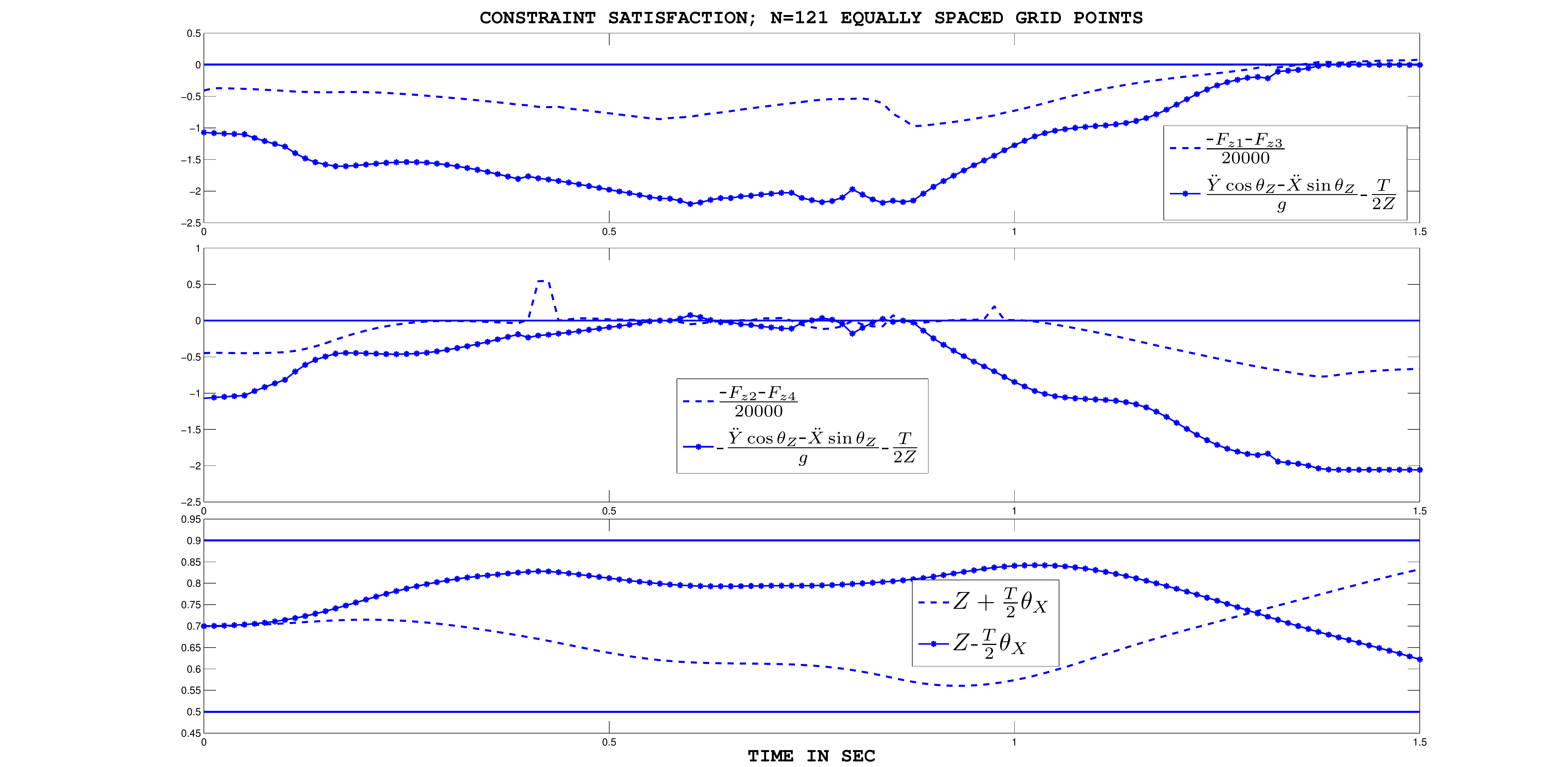}
\caption{Satisfaction of Disjunctive Constraints; $N=121$}\label{EO_MPE_IndFLFrCG3N121ConstraintFig}
\end{figure}
\begin{figure}
\centering
\includegraphics[scale=0.2]{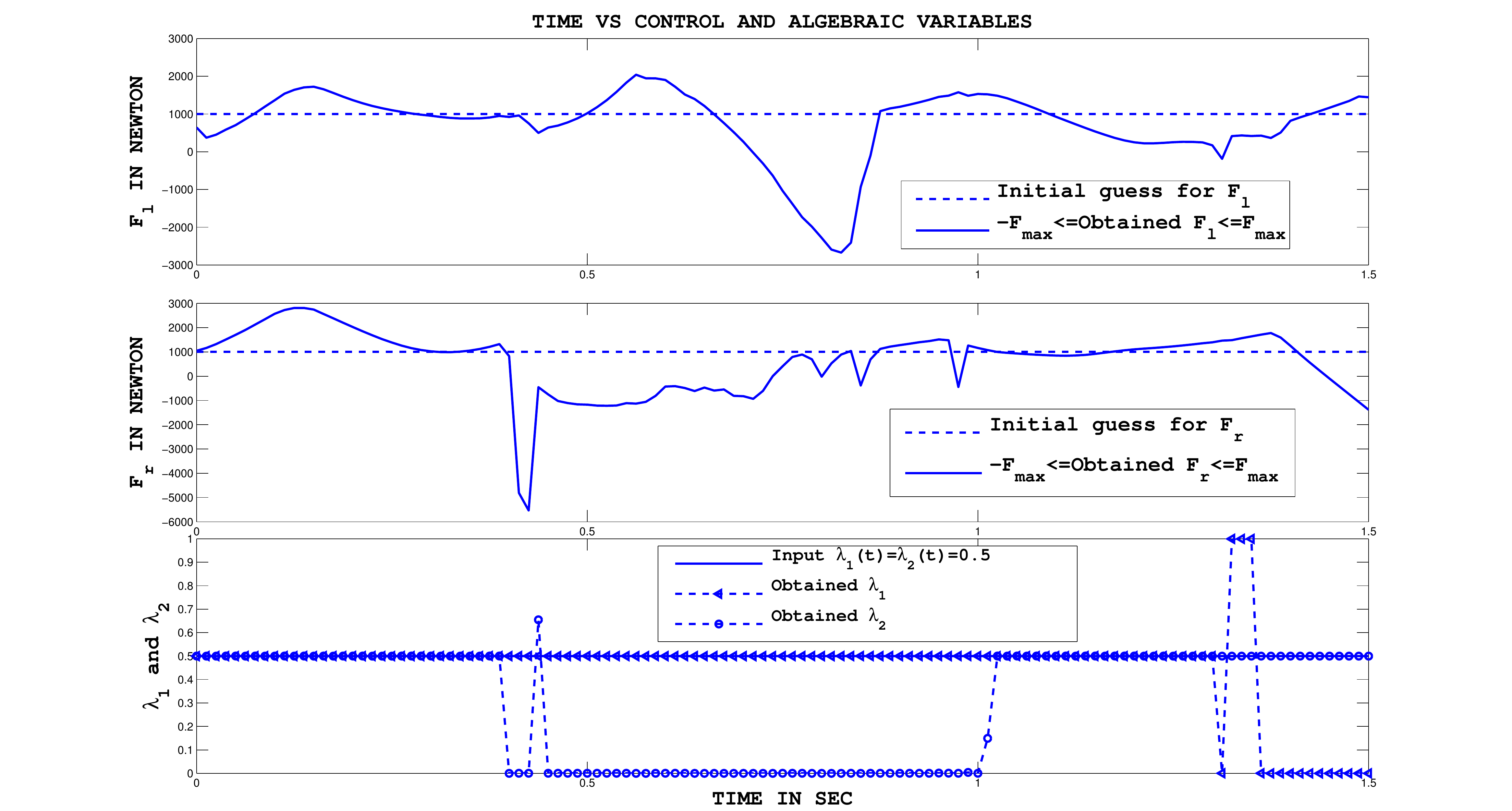}
\caption{Control and algebraic variables with Disjunctive Constraints; $N=121$}
\label{EO_MPE_IndFLFrCG3N121CtrlAlgFig}
\end{figure}
However, for the realization on a active suspension system, it is desirable to have a single time profile for the control 
forces insensitive to the perturbation of initial guesses and grid size in the optimizer.
This can be accomplished using a set of anti-symmetric control forces, i.e., $F_l = -F_r$. 
With this additional constraint $F_l + F_r = 0$, the solutions are obtained with different initial guesses as before. 
The controls then become almost insensitive to the perturbation of initial guesses and of grid sizes, 
as the anti-symmetric control force constraint resolves the problem of the local optima.
\begin{figure} 
 \centering
 \includegraphics[scale=0.2]{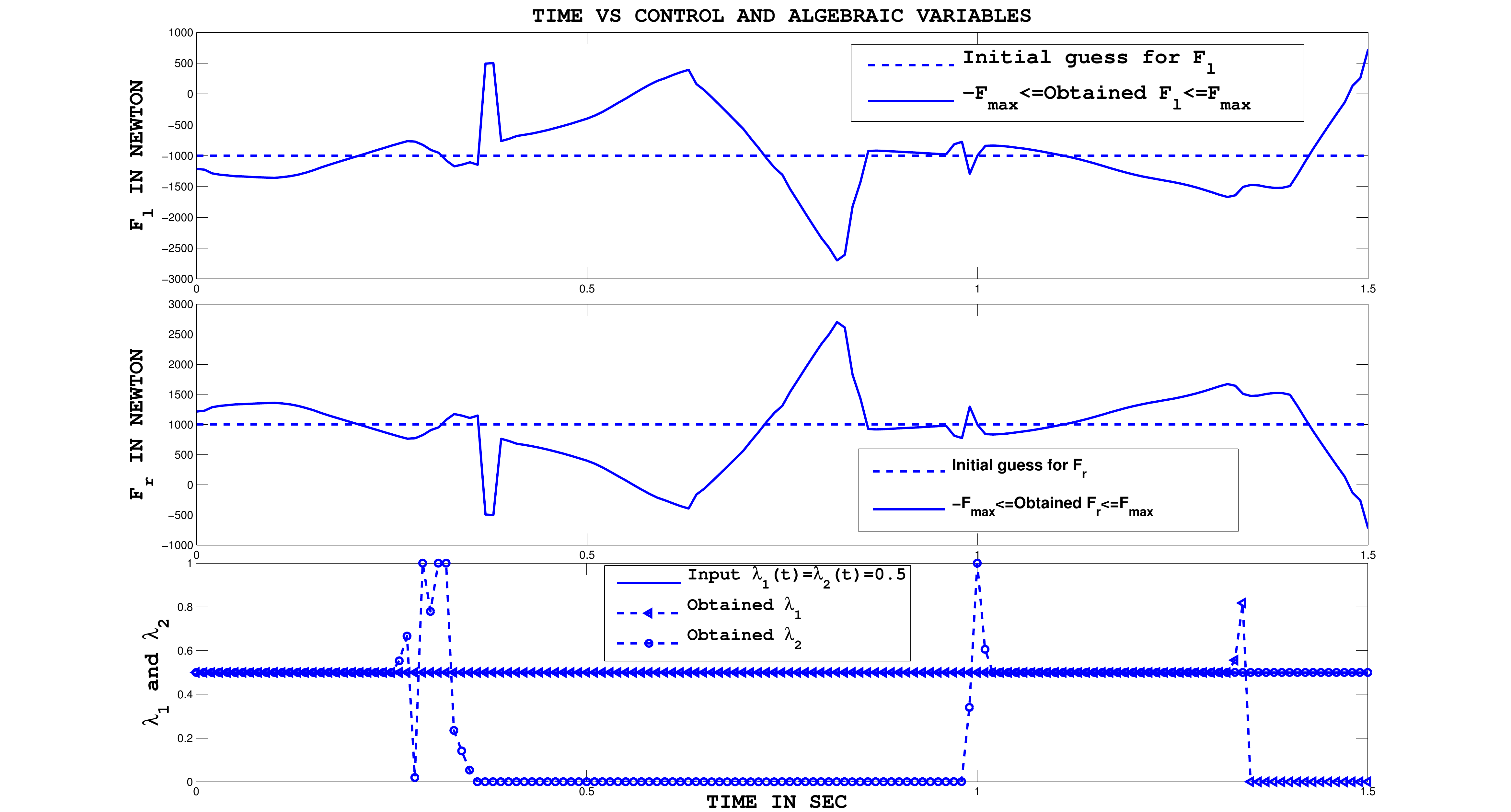}
 \caption{Anti-Symmetric Controls with Disjunctive Constraints; $N=151$}
 \label{EO_MPE_IndFLFrZrCG3N151CtrlAlgFig}
\end{figure}
\begin{figure}
 \centering 
 \includegraphics[scale=0.18]{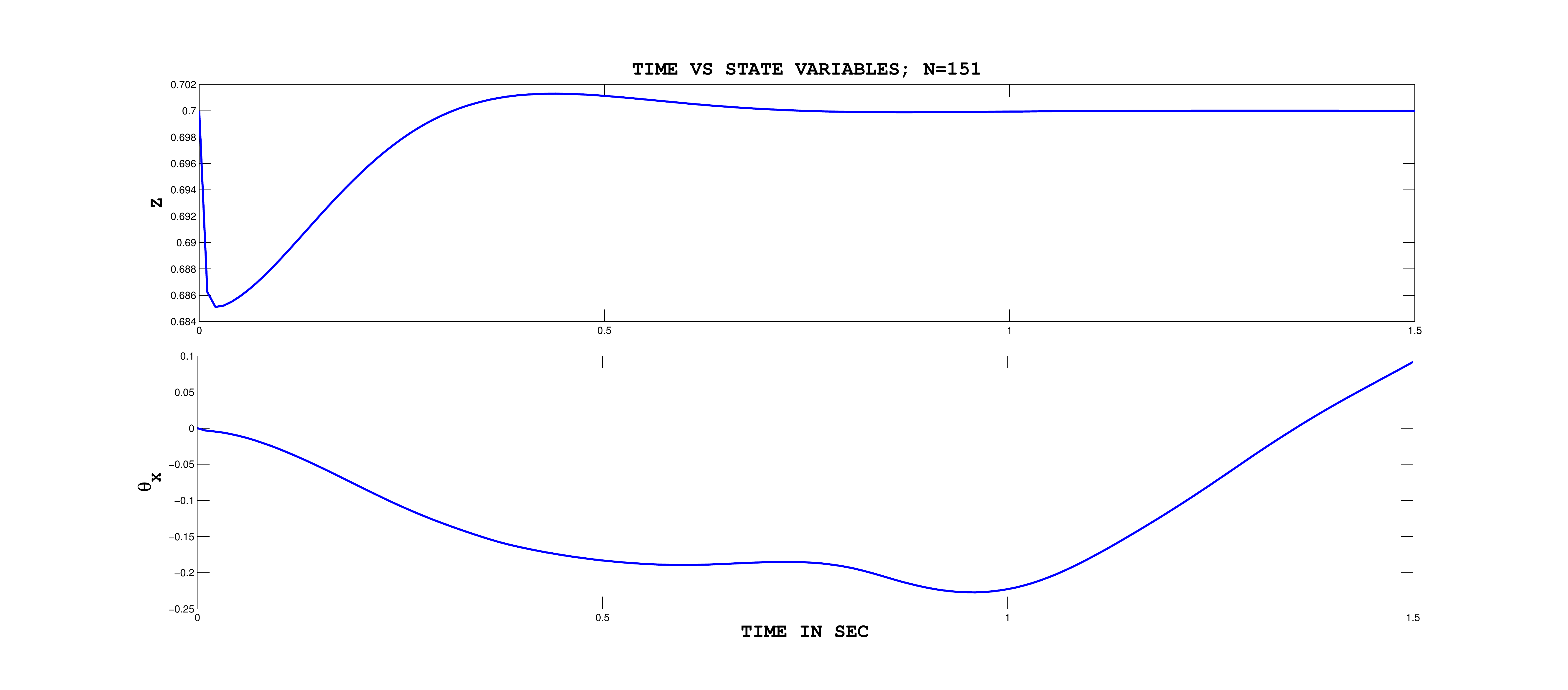}
 \caption{State Variables for Anti-Symmetric Controls with Disjunctive Constraints}
 \label{EO_MPE_IndFLFrZrCG3N151StateFig}
 \end{figure}
Figure \ref{EO_MPE_IndFLFrZrCG3N151CtrlAlgFig} shows the control forces, satisfying the anti-symmetric force constraints and computed with 
the initial guess $F_r(t) = -F_l(t) = 1000$ using $N=151$ grid points, while
figure \ref{EO_MPE_IndFLFrZrCG3N151StateFig} shows the time profile of $Z$ and $\theta_X$.
\section{Rollover Index and Efficacy of Disjunctive Constraints}\label{sec:ri}
To underscore the effectiveness of the disjunctive constraints approach over the conservative constraints, we use the concept 
of rollover index \cite{rajamaniBook}. The rollover index  $R$ is defined as 
\begin{equation*}
R=\frac{(F_{Z2}+F_{Z4})-(F_{Z1}+F_{Z3})}{(F_{Z2}+F_{Z4})+(F_{Z1}+F_{Z3})}.
\end{equation*}
The wheel lift-off starts occurring at $|R|=1$. $|R|<1$ indicates no lift-off of the wheels and hence no rollover. 
On the other hand $|R|>1$ indicates lift-off of the wheel. But not all 
wheel lift-off causes rollover. This is exactly where the present
disjunctive constraint (\ref{ConEO}) approach becomes useful in computing anti-rollover forces
in the suspensions, i.e., a correct negative roll moment can stabilize the vehicle even when the wheels have lifted off
the ground. 
The following comparison of absolute values of rollover index over the 
simulation interval shows that the disjunctive approach is a more realistic approach.  \par
Let us consider the vehicle control problem with the following computation for both the disjunctive constraint
and the conventional conservative approaches. 
The transcription is 
done with $N=151$ equally spaced grid points along with the initial guess for controls as $F_l=F_r=0$. 
The solution plots for the disjunctive constraints are given in figure \ref{EO_MPE_IndFLFrCG1N151CtrlAlgFig}; 
these indicate the absence of rollover. 
An anti-rollover solution with conservative constraints (no wheels are allowed to lift-off) is computed, as shown in
figure \ref{Fig:indConsCtrlZero}. 
 \begin{figure} 
 \includegraphics[scale=0.15]{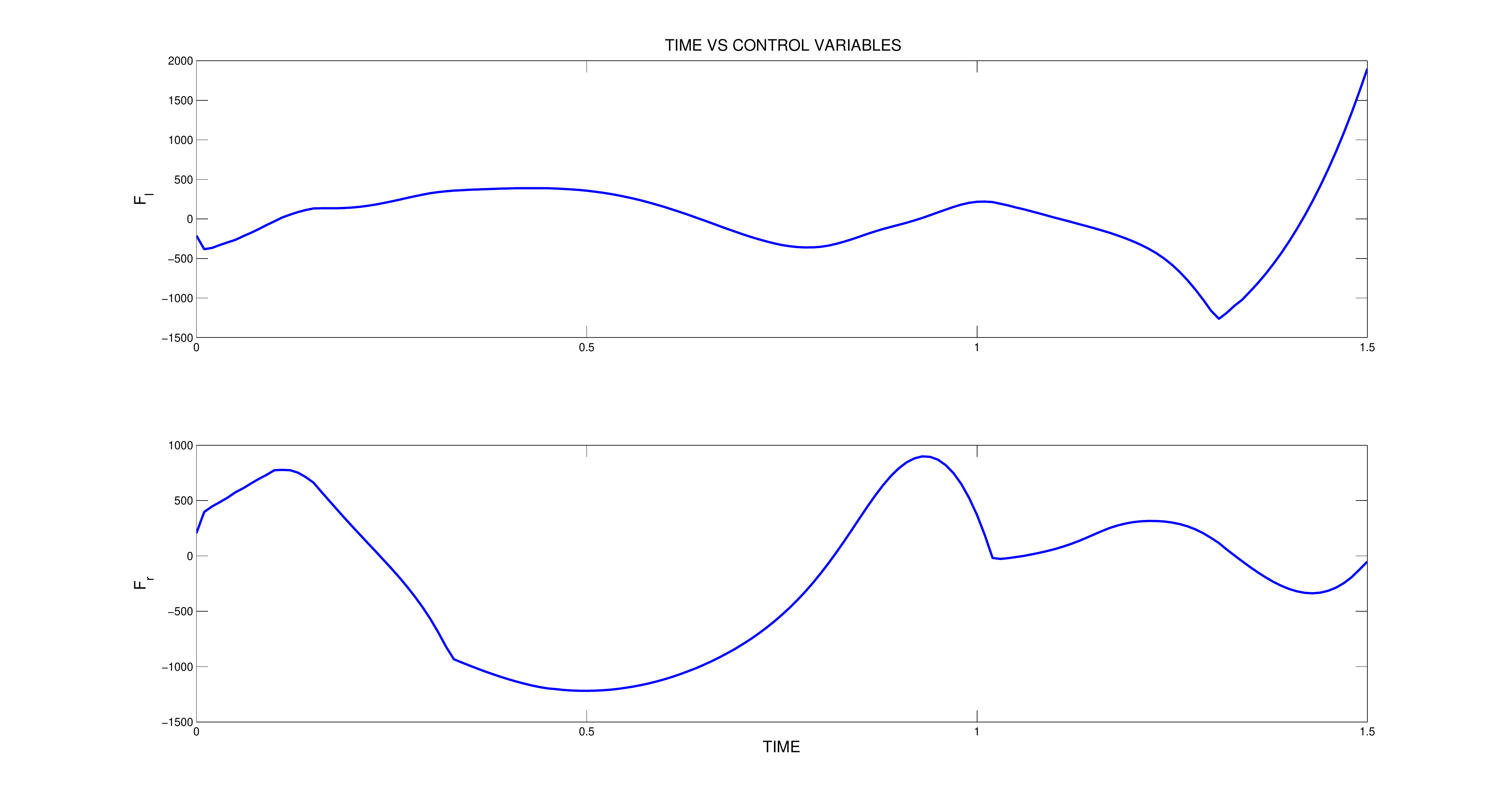}
\caption{Control variables $F_l,\,F_r$ computed using the conservative constraint approach with initial guess value of zero; $N=151$}\label{Fig:CON_MPE_IndFLFrCG1N151Ctrl}
\label{Fig:indConsCtrlZero}
\end{figure} 
Rollover index from both the simulations are plotted in figure \ref{Fig:Rolloverindex}. 
Figure \ref{Fig:Rolloverindex} shows that the disjunctive constraints allow the wheels to lift-off
and yet the vehicle gets stabilized by the computed controls. 
The same figure compares the disjunctive constraint approach to the conventional conservative approach 
which does not cover the more severe situation of lift off
of the wheels. Thus, whenever there is room for stabilization and prevention of rollover 
even when the wheels have lifted off during a severe maneuver, the disjunctive constraint approach provides an 
effective way to compute the control forces in active suspensions.
\par
Figure \ref{Fig:ri2} shows the rollover index obtained from the computation by the transcription method to find anti-symmetric 
controls (see figure \ref{EO_MPE_IndFLFrZrCG3N151CtrlAlgFig}) for the disjunctive dynamics. In this case too, the vehicle is
stabilized with anti-roll moment induced by the anti-symmetric control forces at the suspensions even though wheels are allowed
to be lifted off (corresponding to $|R| > 1$). 
\begin{figure}
\centering
\includegraphics[scale=0.193]{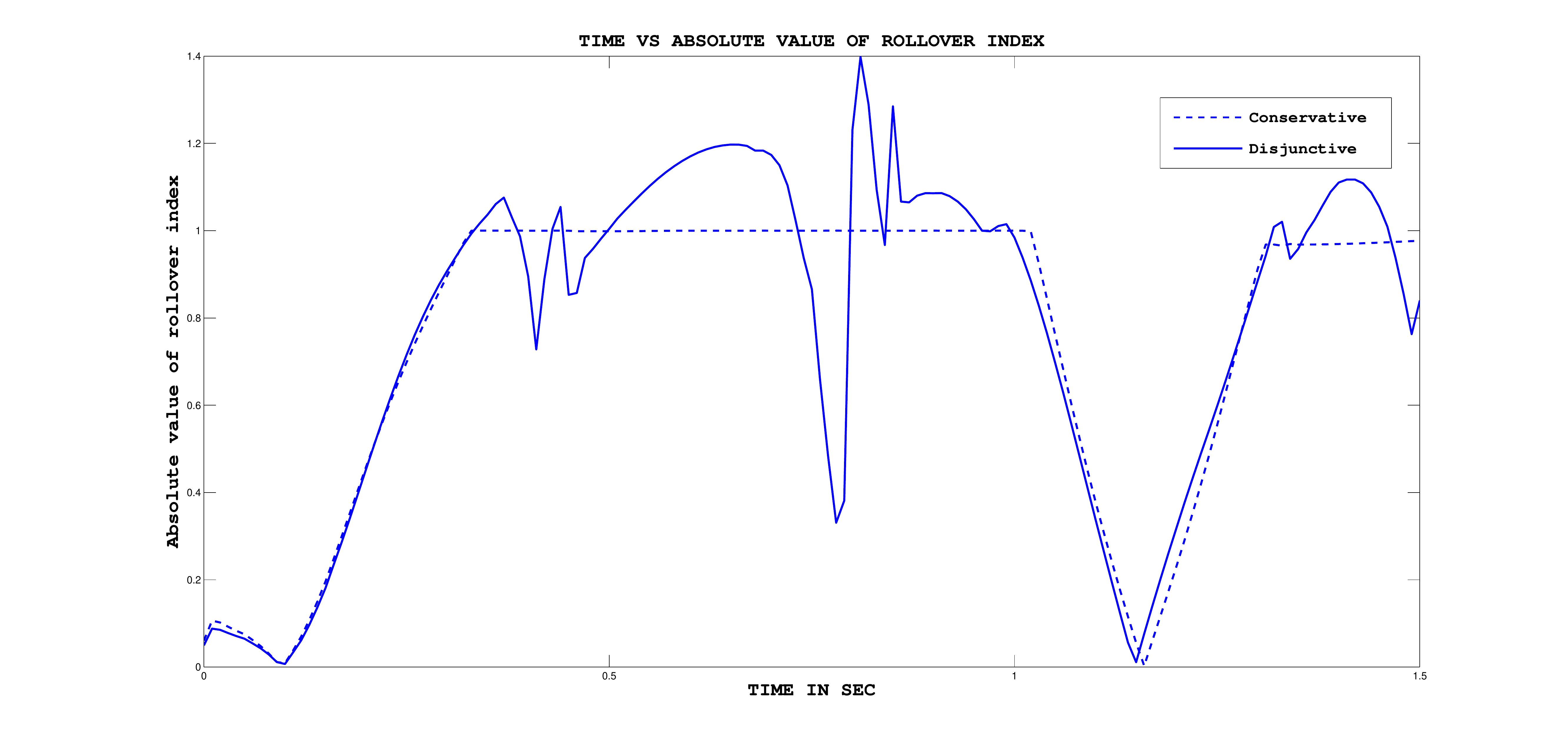}
\caption{Comparison of the magnitude of the rollover index over time. The present model using disjunctive constraints show stabilization even
after wheels have lifted off (rollover index magnitude greater than one) while the existing conservative approach does not allow wheels to lift off.}\label{Fig:Rolloverindex}
\end{figure}
\begin{figure}
\centering
\includegraphics[scale=0.193]{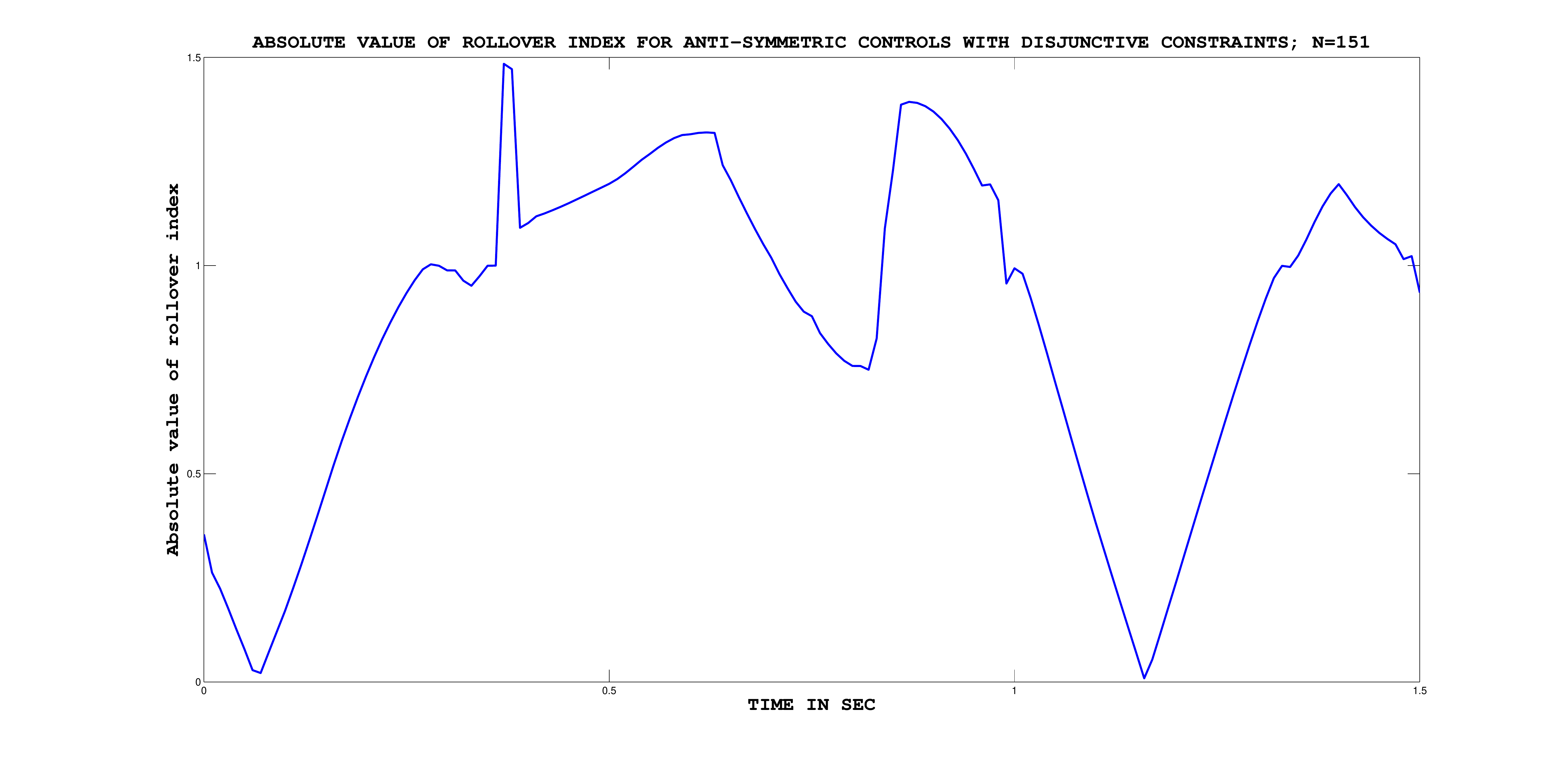}
\caption{Magnitude of the rollover index over time with anti-symmetric control forces and disjunctive constraints. The present model using disjunctive constraints show stabilization even
after wheels have lifted off (rollover index magnitude greater than one).}\label{Fig:ri2}
\end{figure}
\section{Synthesis of Control Forces in terms of Sensor Data Output} \label{synth}
In order to develop an effective control system, it is necessary to investigate if the control forces can be represented as a linear 
combination of sensible parameters so that the sensor output data from the system can be used to synthesize the control forces 
in the active suspensions. 
Specifically, we seek to determine the 
(local) optimum values (in the neighborhood some values that are useful and attainable from the engineering
point of view) of the coefficients $\varphi$'s in the formula 
$$F_l=\varphi_1\theta_X+\varphi_2\dot\theta_X+\varphi_3\dot\theta_Z+
\varphi_4\left(Z-Z_0\right)+\varphi_5\dot Z$$ along with the constraint $F_r=-F_l$.  
For synthesis of the control force, we find the dominating terms in the above formula to determine which sensor parameters
are critical and determine the force. The linear combination of the dominant terms which best approximates $F_l$ is determined. 
The control output is synthesized from these terms weighted by the respective coefficients. The weights computed for 
a range of maneuvers similar to that in our numerical computation are stored
in a look up table on the embedded computing element of the controller and fed forward into the active suspension.
Using $\hat\varphi$ to denote the vector 
$\begin{bmatrix}
\varphi_1 & \varphi_2 & \varphi_3 & \varphi_4 & \varphi_5
\end{bmatrix}$, 
a local optimum is obtained with the initial guess $\hat\varphi=0$ 
and $N=151$ grid points to yield $\hat\varphi=
\begin{bmatrix}
-916.5607 & -2102.4 & -4799.4 & 3.8244\times 10^{-4} &  -0.0078
\end{bmatrix}$. 
The time-variation of $F_l$ is shown in figure \ref{EO_MPE_FlFrZrLCCG1N151_CtrlAlgFig}. 
Also, the magnitudes of the individual terms that make up the equation 
$F_l=\varphi_1\theta_X+\varphi_2\dot\theta_X+\varphi_3\dot\theta_Z+
\varphi_4\left(Z-Z_0\right)+\varphi_5\dot Z$ is shown in figure \ref{EO_MPE_FlFrZrLCCG1N151_CtrlAlgFig}. 
It is apparent that the term $\varphi_3\dot\theta_Z$ is the most dominating term and 
approximates the total control force the best. This means, applying a reaction force proportionate to the rate of yaw 
stabilizes the roll over tendency of the vehicle undergoing fishhook maneuver with wheels lifting off the ground.
\begin{figure} 
 \centering
\includegraphics[scale=0.182]{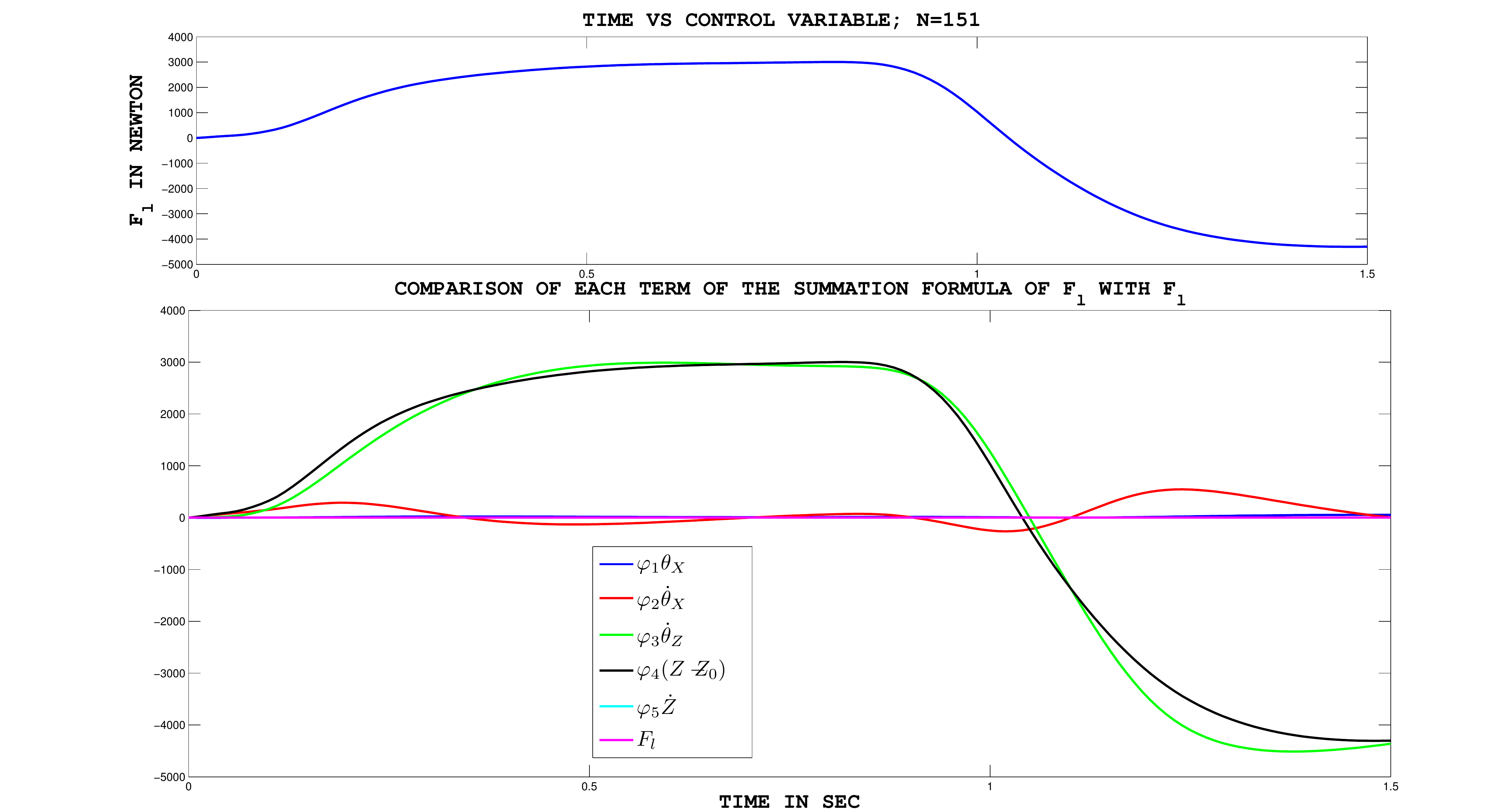}
\caption{Control variable $F_l$ for Disjunctive Constraints along with its Approximation}
\label{EO_MPE_FlFrZrLCCG1N151_CtrlAlgFig}
\end{figure}
In another numerical experiment with different initial guess (active forces) we obtain $F_l$ as shown in 
figure \ref{Fig:EOFinalCtrl} and this closely matches the one in figure \ref{EO_MPE_FlFrZrLCCG1N151_CtrlAlgFig},
showing the robustness of the approximation, i.e., insensitivity to perturbation of initial guesses in the optimizer. 
The coefficient $\varphi_3$ is found in this case to be $-4796.2$ which is comparable to its value from the previous
example, with an initial guess that the control forces are inactive. This shows an 
effective way to synthesize the control forces that can vary linearly in proportion to the 
sensed parameter $\dot{\theta}_Z$ and stabilize the vehicle in spite of the wheels lifting off the ground.  
\begin{figure} 
\centering
\includegraphics[scale=0.181]{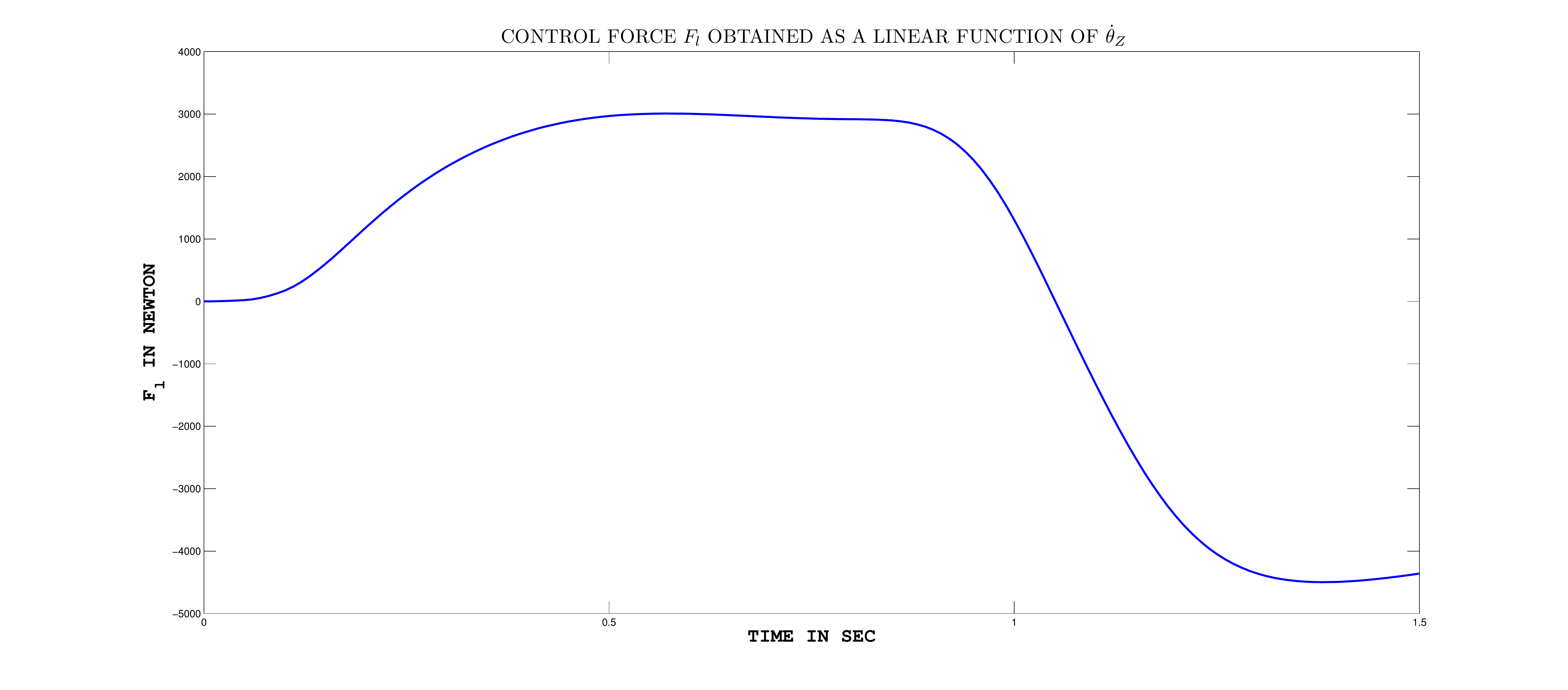}
\caption{Control variable $F_l$ expressed as $F_l=\varphi_3\dot\theta_Z$ for Disjunctive Constraints}
\label{Fig:EOFinalCtrl}
\end{figure}
\subsection{Disjunctive Constraints vis-a-vis Conservative Constraints}\label{subsec:NumConserv}
We compare solutions with disjunctive constraints with those corresponding to the conservative constraints (\ref{ConCons}) 
defined in section \ref{ConsSec}. 
Choosing the same initial guesses and the step-size as for the solution shown in 
figure \ref{EO_MPE_IndFLFrZrCG3N151CtrlAlgFig}, the control forces obtained for the conservative case 
are shown in figure \ref{Fig:CON_MPE_IndFLFrCG3N151Ctrl}.
\begin{figure} 
 \centering
\includegraphics[scale=0.2]{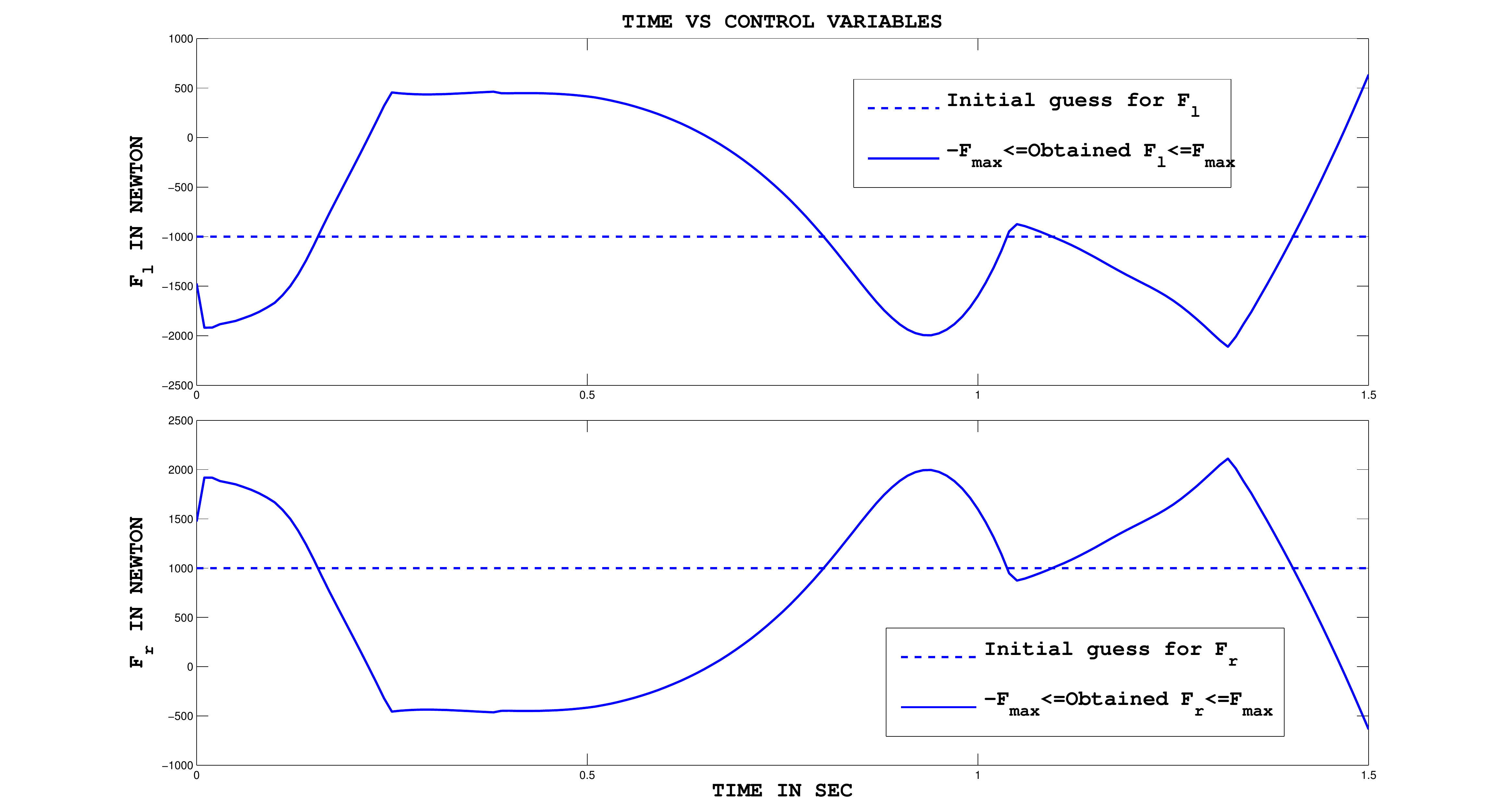}
\caption{Anti-Symmetric Controls with Conservative Constraints; $N=151$}\label{Fig:CON_MPE_IndFLFrCG3N151Ctrl}
\end{figure}
As before we model the control forces as a 
linear combination of sensible parameters
and find the (local) optimum values of the coefficients $\varphi$'s in the formula $F_l=\varphi_1\theta_X+\varphi_2\dot\theta_X+\varphi_3\dot\theta_Z+
\varphi_4\left(Z-Z_0\right)+\varphi_5\dot Z$ along with the constraint $F_r=-F_l$. 
Figure \ref{Fig:CON_MPE_FlFrZrLCCG1N151_Ctrl} shows the control forces $F_l$ and $F_r$ 
obtained with the initial guess $\varphi_i=0,$ $\forall i=1,\cdots,5$ and with $N=151$ grid points, while figure \ref{Fig:CON_MPE_FlFrZrLCCG1N151_Constraint}
shows the satisfaction of the conservative constraints. The (local) optimal value of the coefficients $\varphi_i$'s in the above 
linear combination formula of $F_l$ is found to be 
\begin{equation*} \hat\varphi= 
\begin{bmatrix}
-800.8541 & -1556.5 & -4763.9 &  0 &   6.0974 \times 10^{-4}
\end{bmatrix}. \end{equation*} 
In figure \ref{Fig:CON_Dom_MPE_FlPlFrZrLCCG1N151_State} we find 
that $\varphi_3\dot\theta_Z$ is the dominating term approximating $F_l$ the closest. 
\begin{figure} 
\centering
\includegraphics[scale=0.186]{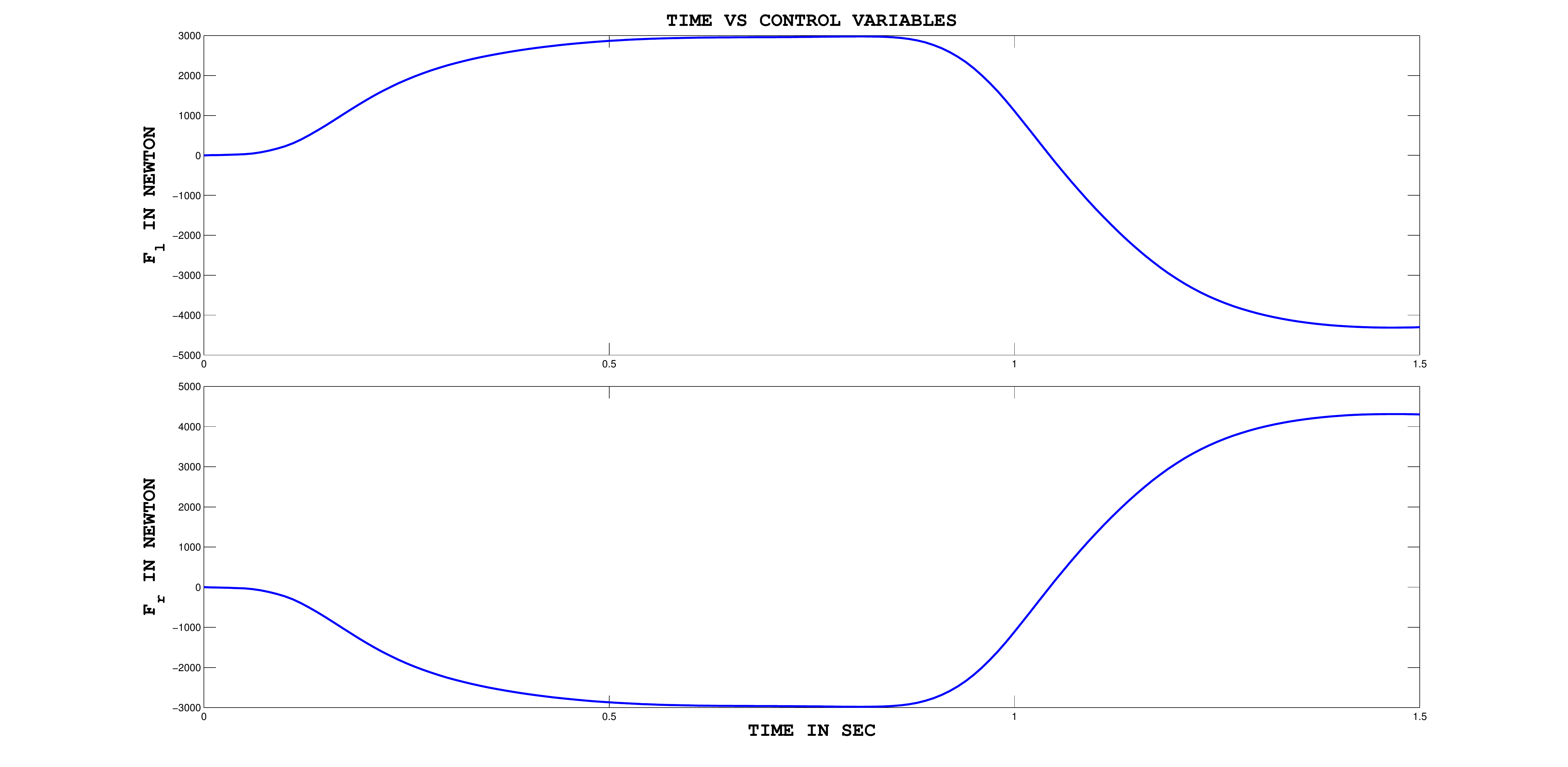}
\caption{Controls  $F_l,\,F_r$ computed with Conservative Constraints using initial guess of zero; $N=151$}
\label{Fig:CON_MPE_FlFrZrLCCG1N151_Ctrl}
\end{figure}
\begin{figure} 
\centering
\includegraphics[scale=0.186]{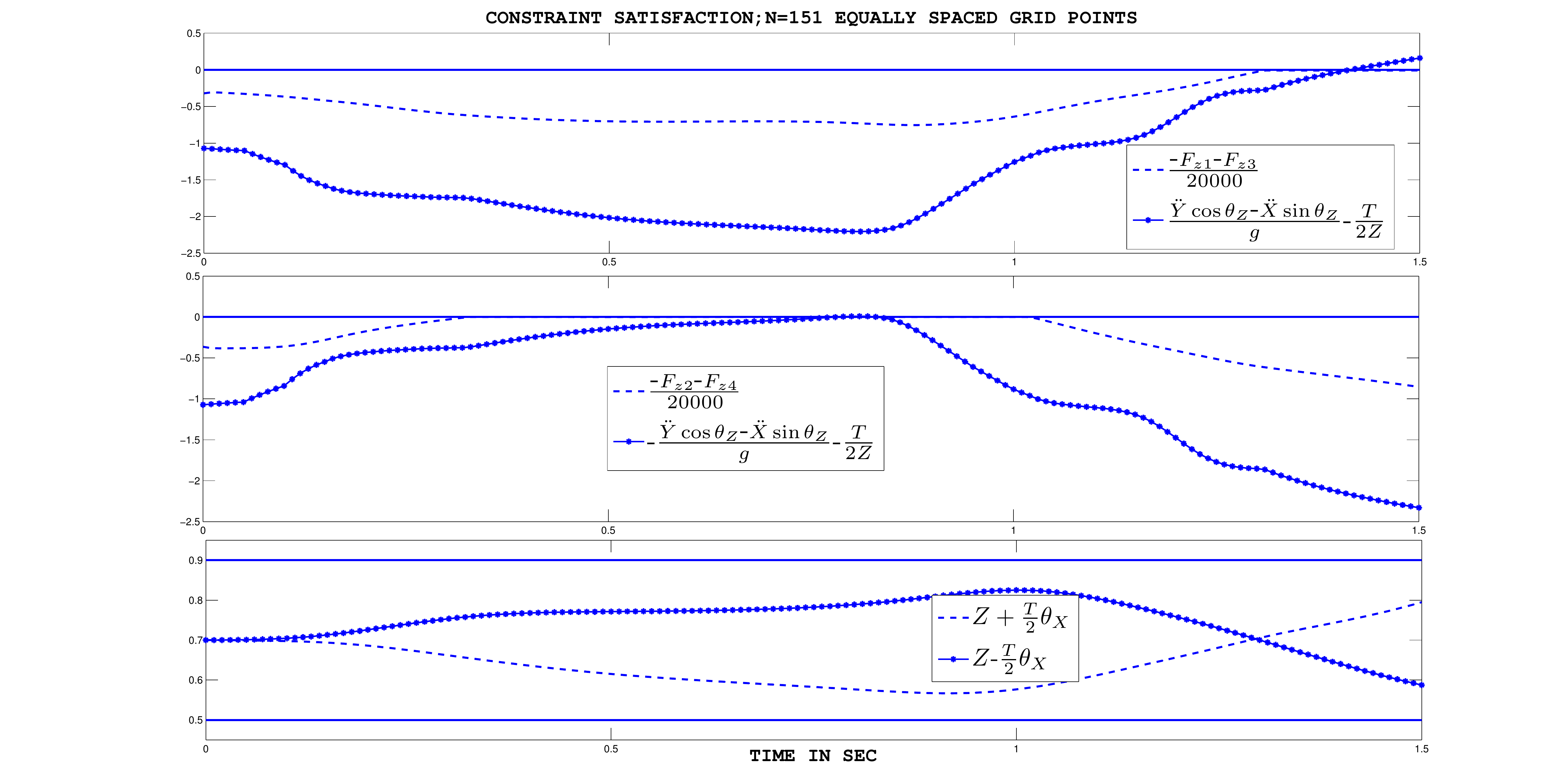}
\caption{Satisfaction of Conservative Constraints with Initial Guess of Zero Control Forces; $N=151$}
\label{Fig:CON_MPE_FlFrZrLCCG1N151_Constraint}
\end{figure}
Hence, as before, $F_l$ is approximated by $\varphi_3\dot\theta_Z$ alone. 
The control force $F_l$ is re-calculated setting $F_l =\varphi_3\dot\theta_Z$  
and $\varphi_3=-4761.2$ is obtained. This is shown in figure \ref{Fig:CON_Final_Ctrl}.  
\begin{figure}
\centering
\includegraphics[scale=0.15]{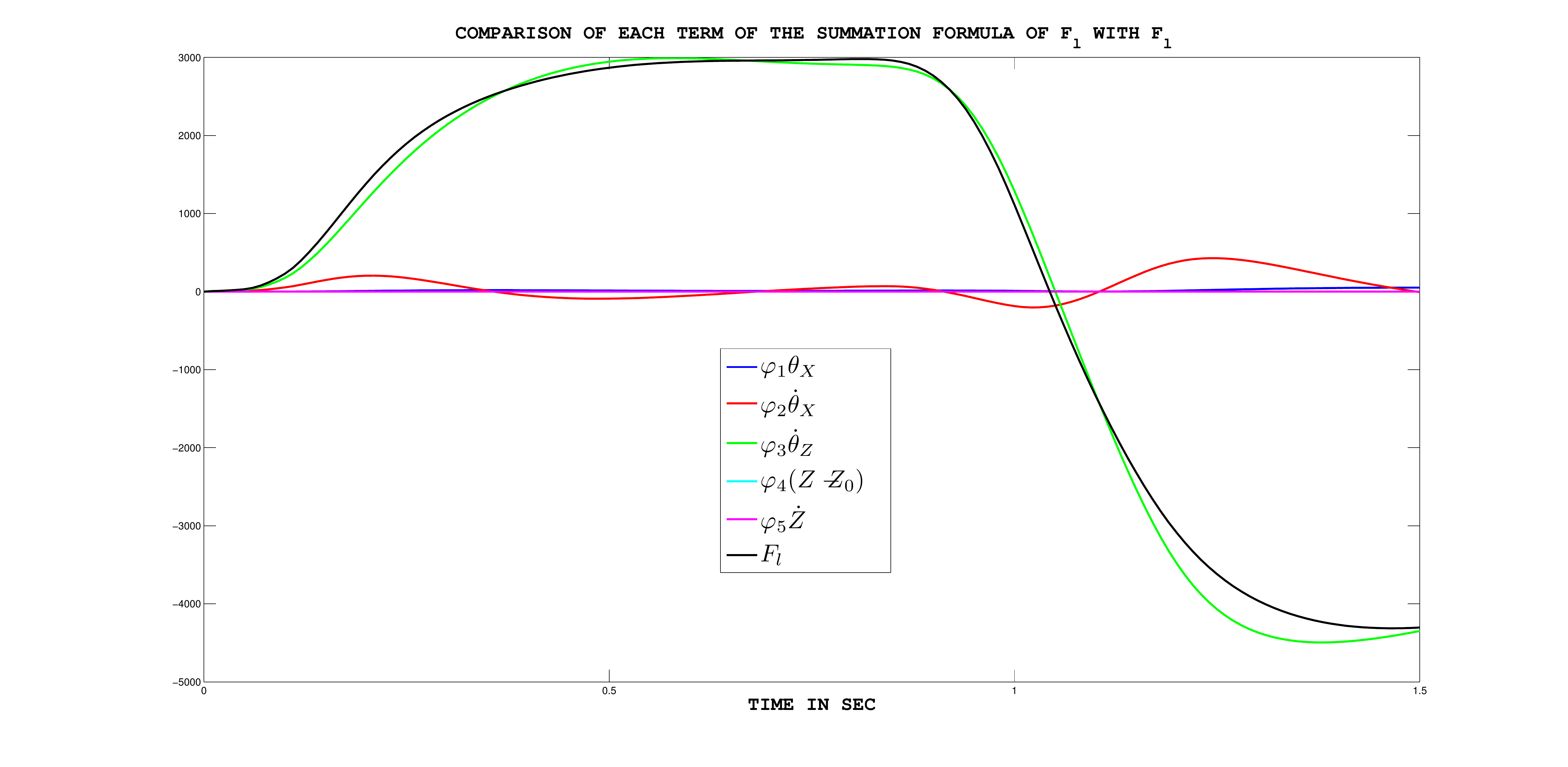}
\caption{Approximation of Control Force $F_l$, for Anti-Symmetric Controls, by each term for Conservative Constraints.}
\label{Fig:CON_Dom_MPE_FlPlFrZrLCCG1N151_State}
\includegraphics[scale=0.15]{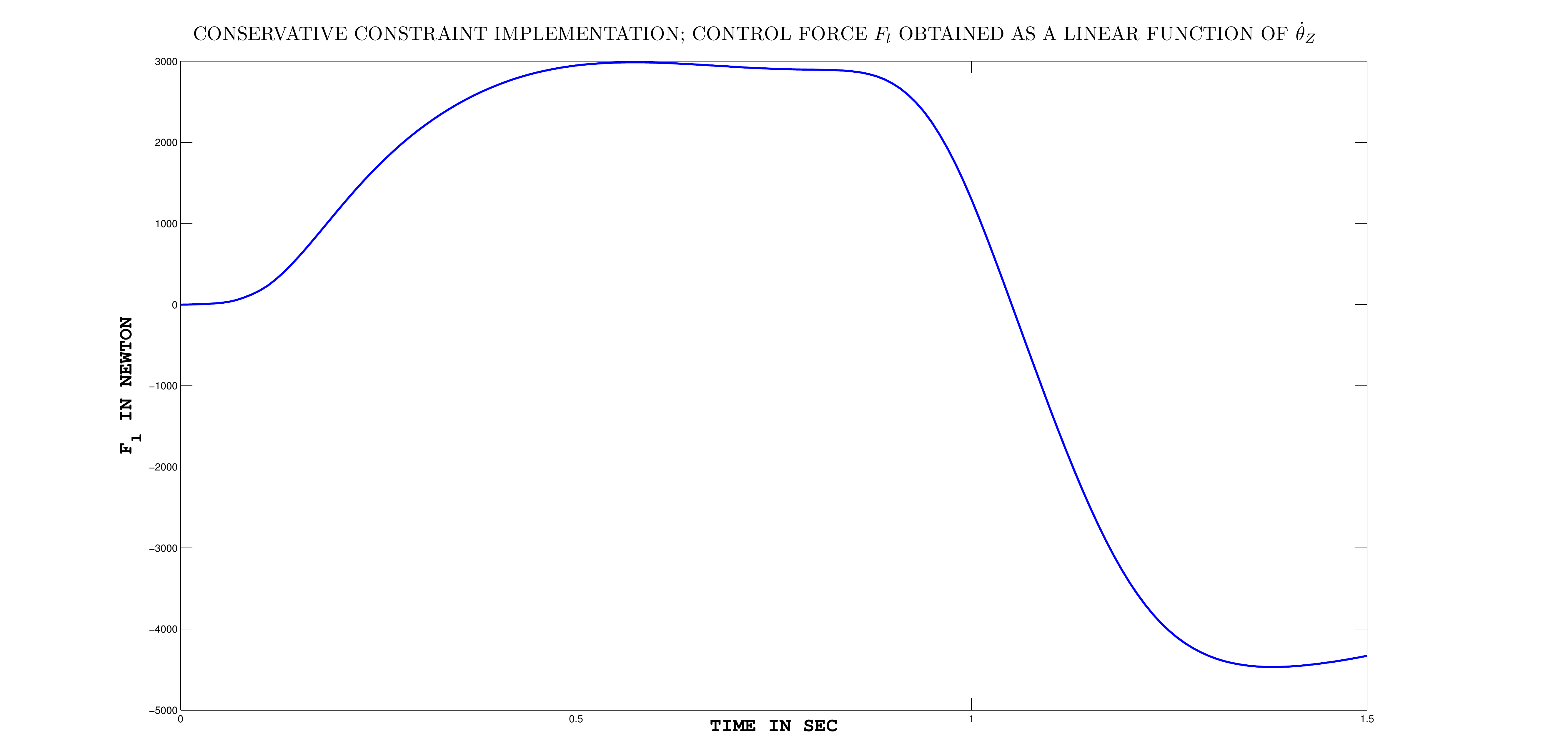}
\caption{Control force $F_l$ expressed as $F_l=\varphi_3\dot\theta_Z$ 
with Conservative Constraints for Anti-Symmetric Control Forces.}
\label{Fig:CON_Final_Ctrl}
\end{figure} 

It is apparent that with conservative constraints, the force requirements are comparable to
that when the inclusive disjunctive constraints are used, although the latter ones correctly stabilize the roll in spite of
the wheels being lifted off.
Thus, control forces as linear functions of  $\dot\theta_Z$ (sensed yaw rate data) 
while satisfying the disjunctive constraints (i.e., with wheels lifting off) is realizable within the existing hardware systems but 
would ensure safety under more severe maneuvering conditions.
\section{Validation using Arbitrary Steering Maneuver with the Synthesized Controls as Input}
In the preceding sections we found that in case of the anti-symmetric controls, 
$F_l$ (and hence $F_r$) can be taken as a linear function of 
the yaw rate, i.e.,  $\dot\theta_Z$. The same synthesized controls are then used against an arbitrary steering input given in 
figure \ref{Fig:ArbSteeringFun} (cf. \cite{ramani}) and we check whether  the disjunctive constraints are satisfied.
\begin{enumerate} 
\item{Inclusive disjunctive constraints, anti-symmetric controls:} 
$F_l$ is approximated by the linear formula $F_l=\varphi_3\dot{\theta}_Z$ and
$\varphi_3=-4796.2$ as computed in the previous section. 
The simulation of the dynamics with $F_l=-4796.2\dot{\theta}_Z$ 
and $F_r=-F_l$ yields the plots given in figure \ref{Fig:CON_EO_Arb_Steer}, which in turn shows that the disjunctive 
constraints are satisfied.
\item{Conservative constraints, anti-symmetric controls:} The synthesized control 
in this case is $F_l=\varphi_3\dot\theta_Z$, $\varphi_3=-4761.2$. The plots in the same figure \ref{Fig:CON_EO_Arb_Steer} verify that 
the constraints are satisfied when the synthesized force is used as the control input.
\end{enumerate}
\begin{figure}
\centering
\includegraphics[scale=0.186]{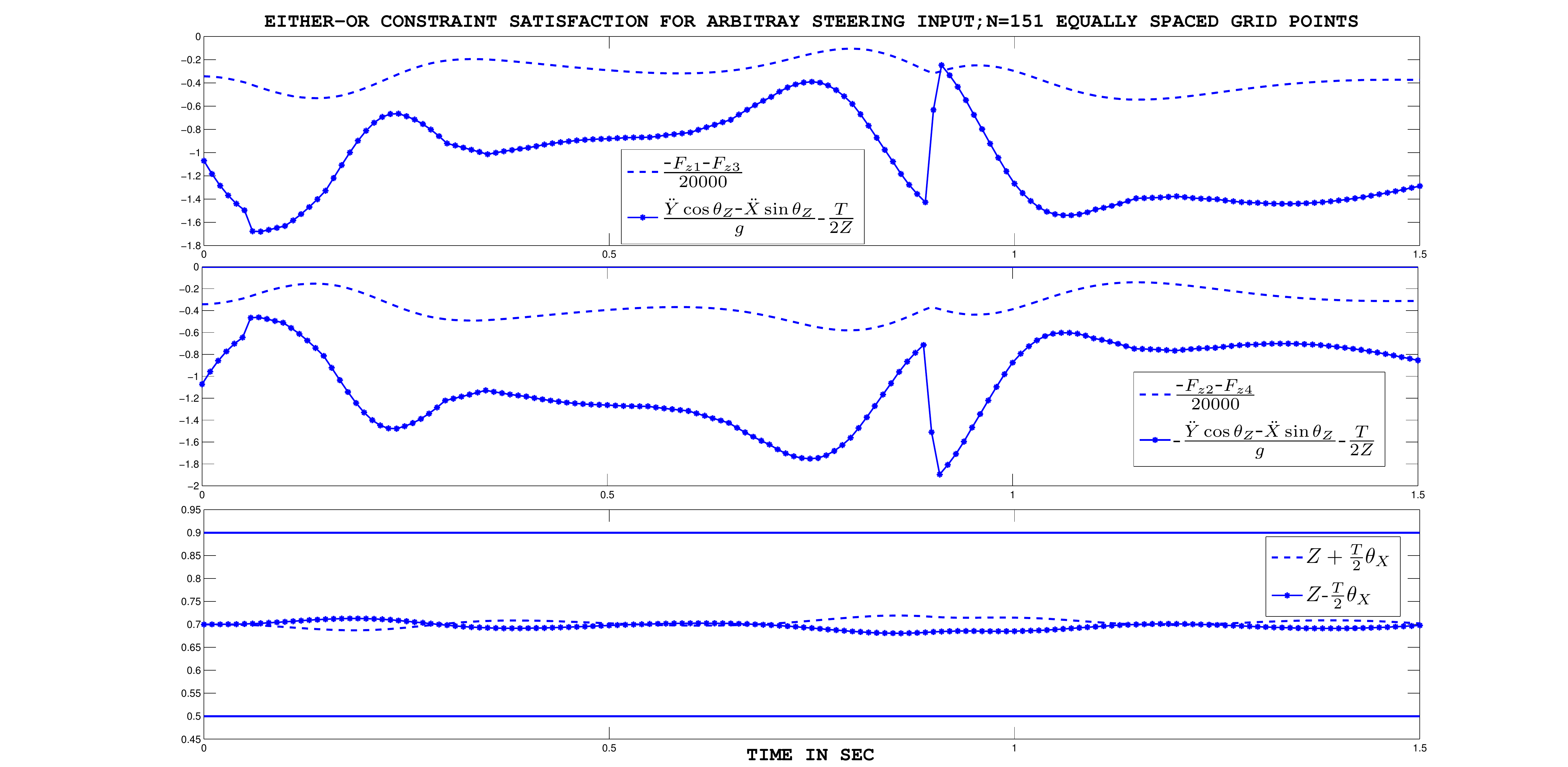}
\caption{Validation: Disjunctive Constraints Satisfaction with an Arbitrary Steering Input
(figure \ref{Fig:ArbSteeringFun}) simulated with input of sensor adapted control forces obtained from
the dynamic optimization with disjunctive constraints and anti-symmetric controls.}
\label{Fig:CON_EO_Arb_Steer}
\end{figure}
\subsection{Robustness Check Against More Arbitrary Steering Inputs}
We use the steering inputs in figures \ref{Fig:NArbS} and \ref{Fig:N2ArbS}. It may be observed that although the steering input rates and values
are higher the controls synthesized with the steering input of figure \ref{Fig:SteeringFun} still works as the control is primary dependent on the yaw
rates which remain comparable in these severe maneuvers. The disjunctive constraints are satisfied in case of both the inputs. We can see
that in figure \ref{Fig:CON_EO_Arb_SteerN1}, the first plot shows that at about 1 s, for a short while only one 
constraint is satisfied, indicating that the control forces are providing the anti-roll moment. 
Although the input in figure \ref{Fig:N2ArbS}
is a more severe maneuver, the plots in figure \ref{Fig:CON_EO_Arb_SteerN2} 
confirm that the disjunctive constraints are satisfied with the 
controls synthesized in section \ref{synth}.
\begin{figure}
\centering
\includegraphics[scale=0.186]{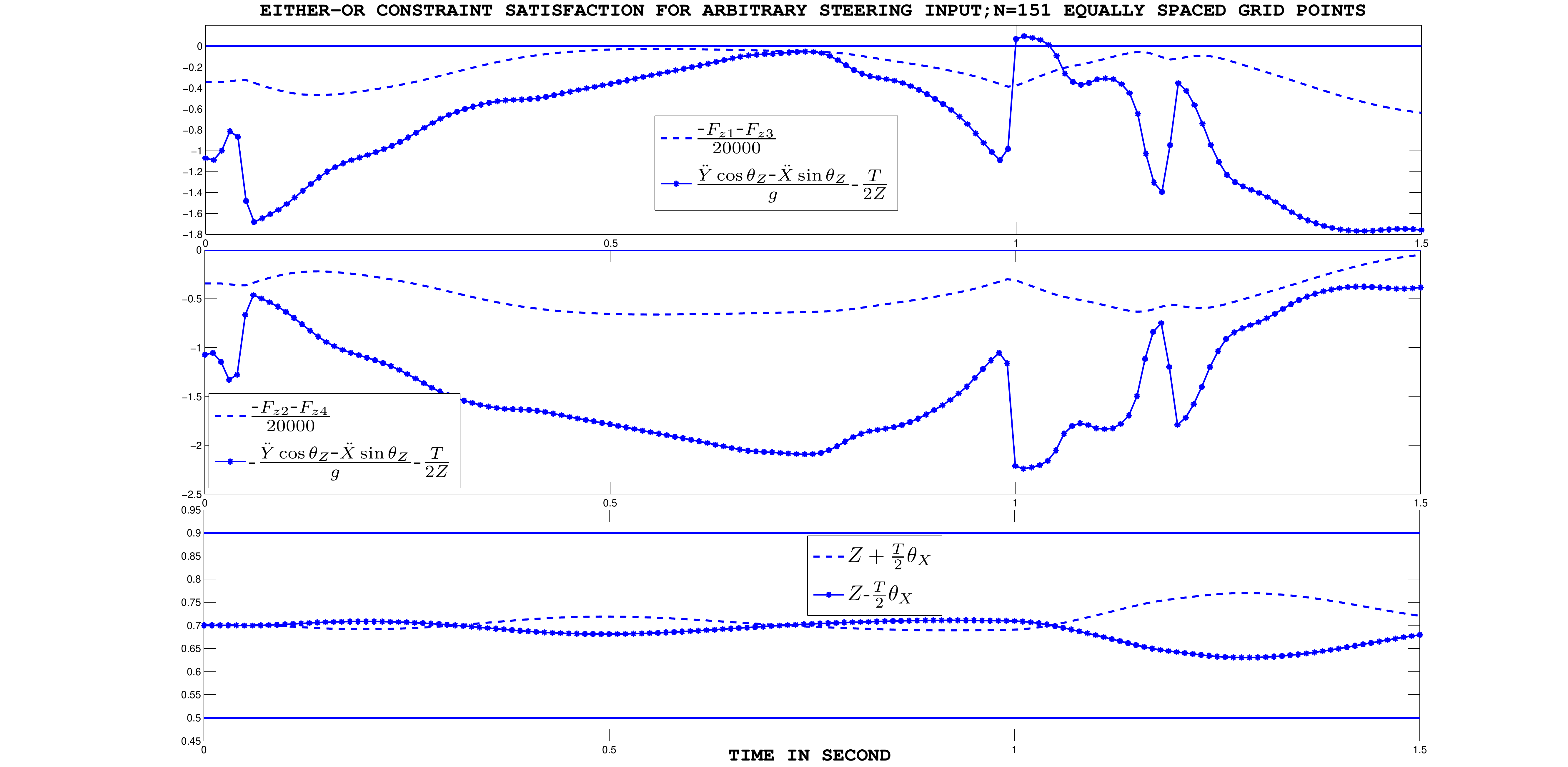}
\caption{Robustness: Disjunctive Constraints Satisfaction with an Arbitrary Steering Input
(figure \ref{Fig:NArbS}) simulated with input of sensor adapted control forces obtained from
the dynamic optimization with disjunctive constraints and anti-symmetric controls.}
\label{Fig:CON_EO_Arb_SteerN1}
\end{figure}
\vspace{-0.2cm}
\begin{figure}
\centering
\includegraphics[scale=0.186]{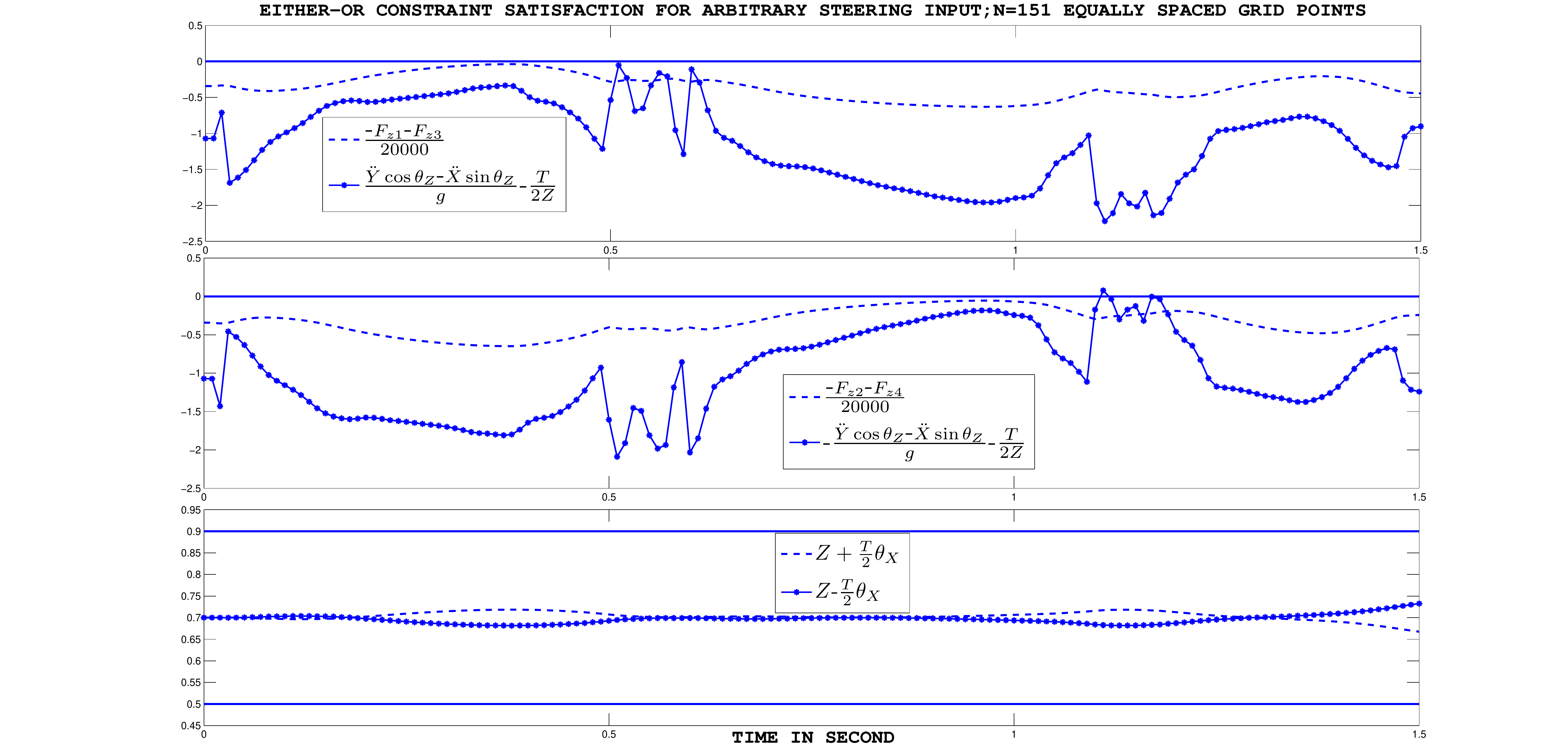}
\caption{Robustness: Disjunctive Constraints Satisfaction with an Arbitrary Steering Input
(figure \ref{Fig:N2ArbS}) simulated with input of sensor adapted control forces obtained from
the dynamic optimization with disjunctive constraints and anti-symmetric controls.}
\label{Fig:CON_EO_Arb_SteerN2}
\end{figure}
\section{Conclusion}
In this work, we have demonstrated that the transcription method 
can be used to solve a non-linear disjunctively constrained problem in vehicle dynamics.
In the process, we have found that the disjunctive constraints enable us to stabilize
vehicles with wheels lifted off, whenever there is a room for doing so, using control forces
comparable to that required for the existing conservative approach of not allowing the wheels to lift off. 
This increases safety under severe maneuver over the conservative approach which is limited to wheels not lifting off the ground. 
Finally we arrived at a simple linear formula proportional to sensor output
data enabling the synthesis of the anti-rollover control forces. 
Future work may be directed toward obtaining smoother independent control forces in the suspensions. 
Other objective functions based on control effort and handling comfort in place of (\ref{Mobj}) need to be explored. 
Additionally, path constraints such as $X(t)=P_1(t)$ and $Y(t)=P_2(t)$ where $P_1(t)$ and 
$P_2(t)$ may be prescribed as known paths in time for increased smoothness and handling comfort.
However, the high index of the resulting differential-algebraic equation model may be a
potential computational difficulty that would need finding effective discretization and optimization strategies.
\section*{Appendix: Parameters Used in Computations}
We list below the simulation data and values of model parameters used in the numerical computations. \\\\
{\bf Time Interval:}~~$t_0=0$ s and $t_f=1.5$ s. \\\\
{\bf  Initial Conditions}\\
$X(0)=0$,  $\dot X(0)=\dot X_0=\frac{200}{9}\textnormal{m/s}$\\
$Y(0)=0$,  $\dot Y(0)=0$\\
$Z(0)=Z_0$,  $\dot Z(0)=0$\\
$\theta_X(0)=0$,  $\dot \theta_X(0)=0$\\
$\theta_Z(0)=0$,  $\dot \theta_Z(0)=0$\\\\
{\bf Parameter Values Used in the Model}\\
$M=1400$ kg\quad(vehicle mass)
$T=1.5$ m \quad(track width)\\
$K=30000$ kg/s$^2$\quad(suspension stiffness)\\
$C=4000$ kg/s\quad(suspension damping)\\
$I_{XX}=1300$ kgm$^2$\quad(roll moment of inertia)\\
$I_{ZZ}=4000$ kgm$^2$\quad(yaw moment of inertia)\\
$h=0.7$ m \quad (height of center of gravity (CG))\\
$a=1.4$ m\quad(longitudinal distance of front axle from CG)\\
$b=1.5$ m\quad(longitudinal distance of rear axle from CG)\\
$g=9.8$ m/s$^2$\quad(acceleration due to gravity)\\
$\mu=1.3$\quad(friction coefficient)\\
$r_{X1}=r_{X2}=a$\quad\\
$r_{X3}=r_{X4}=-b$\quad\\
$r_{Y1}=r_{Y3}=-r_{Y2}=-r_{Y4}=\frac{T}{2}$\quad\\
$\delta_3=\delta_4=0$\quad(steering angle of rear wheels) \\
$\delta_1=\delta_2=\delta(t)$\quad(steering angle of front wheels, see figures \ref{Fig:SteeringFun}-- \ref{Fig:N2ArbS}) \\
$Z_{min}=0.5$ m\quad(minimum height of suspension mount point)\\
$Z_{max}=0.9$ m \quad(maximum height of suspension mount point)\\
$F_{max}=10000$ N \quad(maximum controlling force limit) \\\\
{\bf Constants Used in Tire Force Calculation}\\
$C_T=1.30$,   $\Delta S_h=0$\\
$a_1=-22.1$,  $a_2=1011$,  $a_3=1078$\\
$a_4=1.82$,  $a_5=0.208$,  $a_6=0$\\
$a_7=-0.354$,  $a_8=0.707$
\begin{figure}
\centering
\includegraphics[scale=0.18]{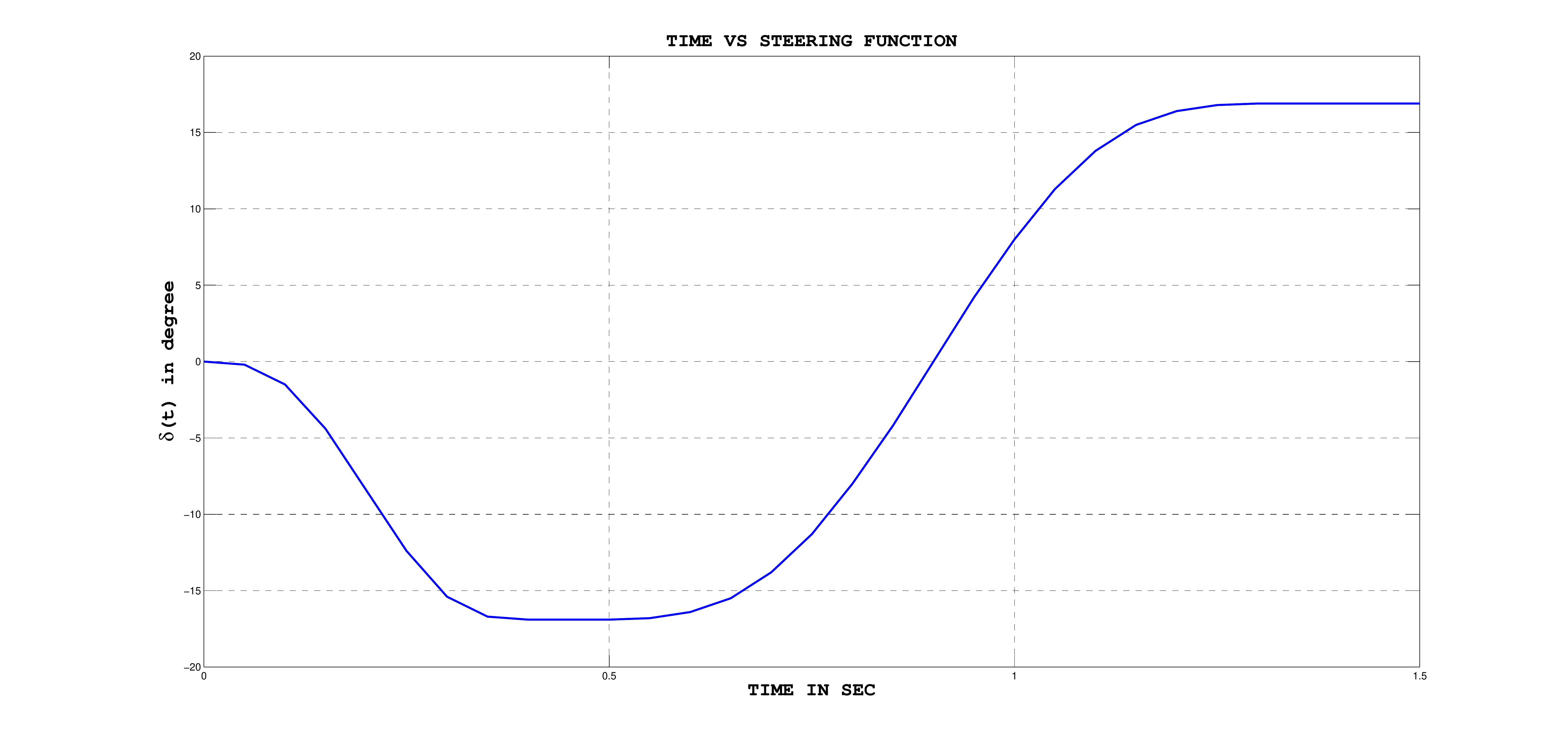}
\caption{Fishhook steering function $\delta_i(t),~i=1,2$ in degrees}\label{Fig:SteeringFun}
\end{figure}
\begin{figure}
\centering
\includegraphics[scale=0.18]{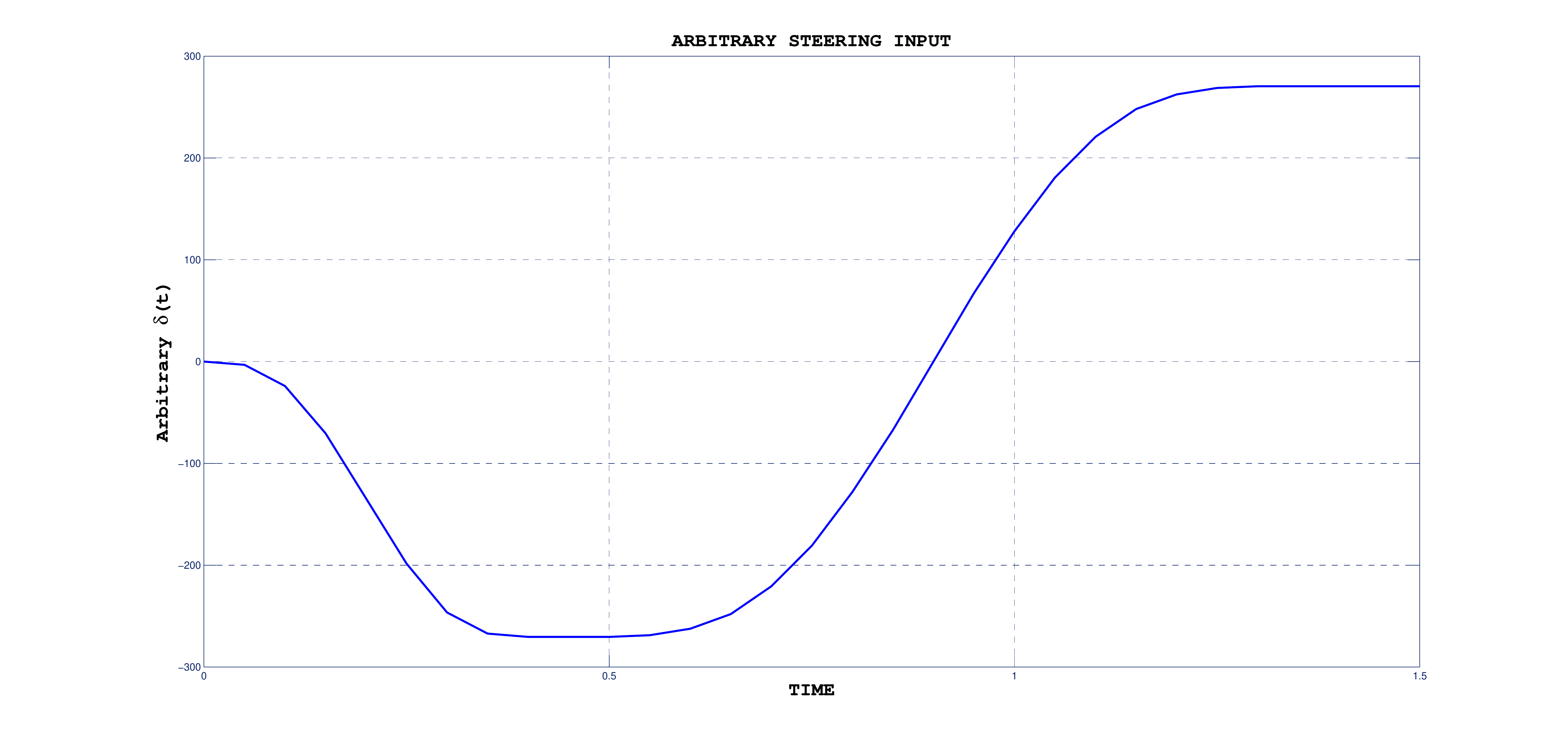}
\caption{Arbitrary steering function $\delta_i(t),~i=1,2$ in degrees. Although similar to the input with which control was synthesized,
the steering rate is faster.}\label{Fig:ArbSteeringFun}
\end{figure}
\begin{figure}
\centering
\includegraphics[scale=0.18]{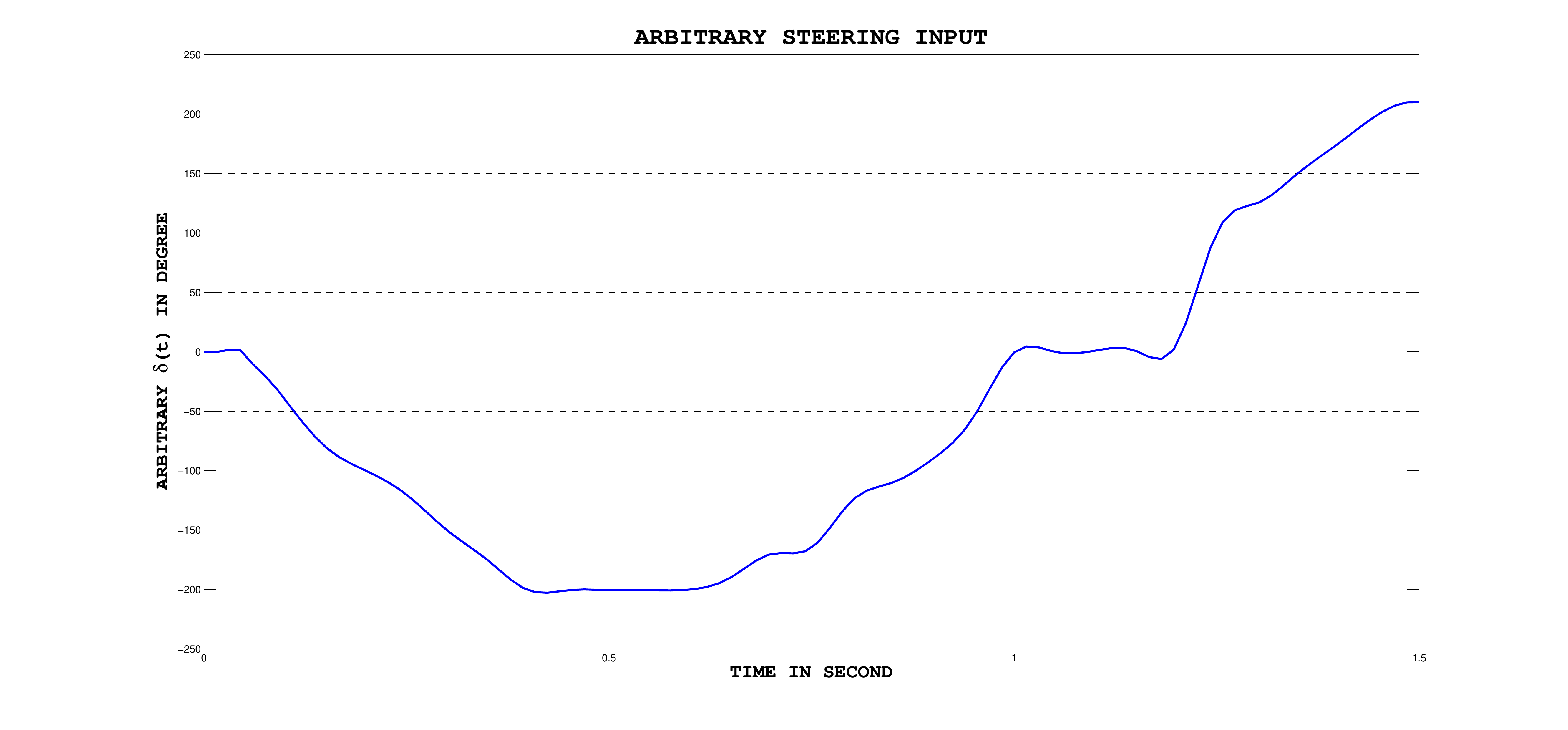}
\caption{Arbitrary steering function $\delta_i(t),~i=1,2$ in degrees. This is a more severe maneuver.}\label{Fig:NArbS}
\end{figure}
\begin{figure}
\centering
\includegraphics[scale=0.18]{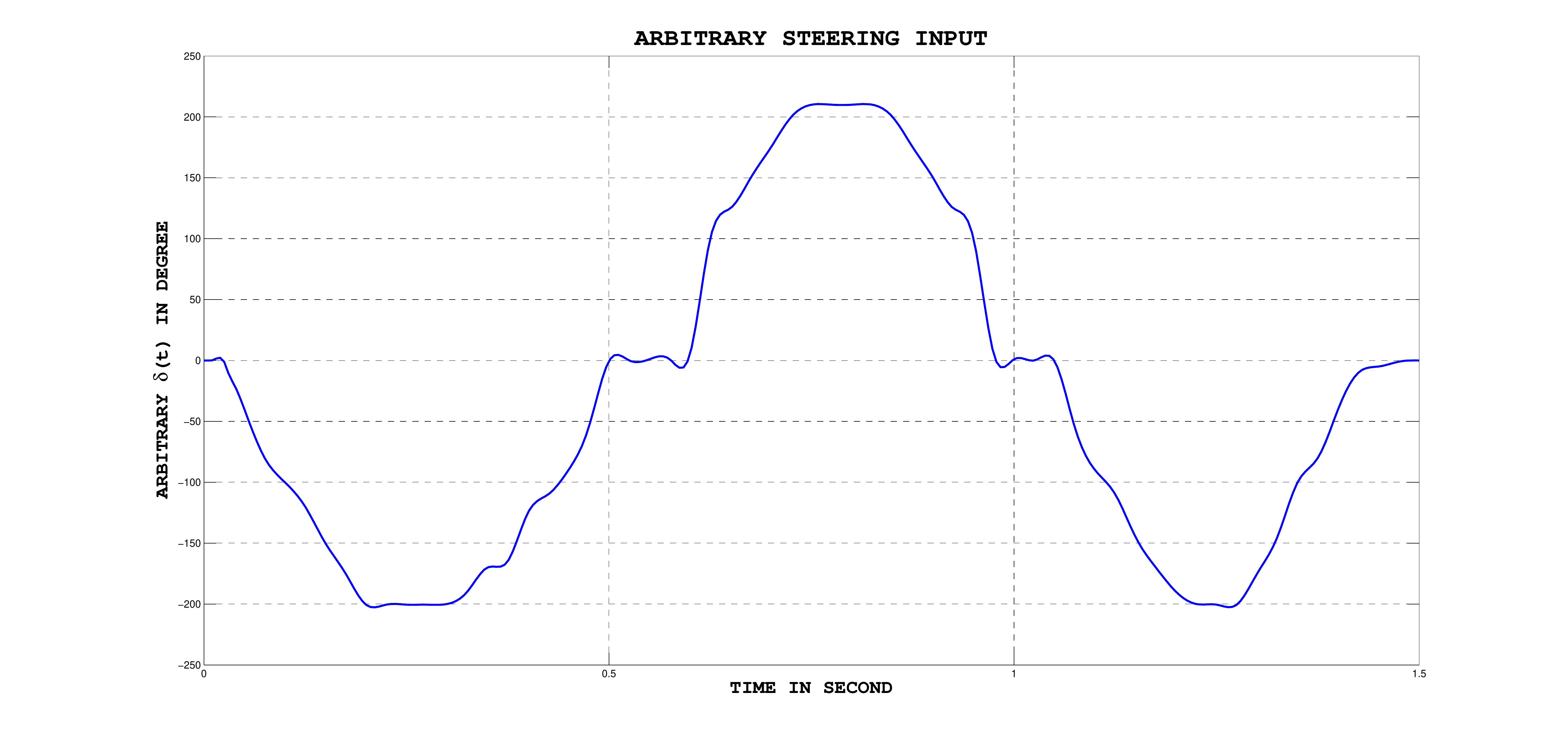}
\caption{Arbitrary steering function $\delta_i(t),~i=1,2$ in degrees. Steering input rates are much faster and the maneuver is even
more severe than that in figure \ref{Fig:NArbS}.}\label{Fig:N2ArbS}
\end{figure}
\newpage
\section*{Acknowledgments} Work of the first author was supported in parts by the CSIR grant $9|79(2368)/2010$ EMR-I
and by the Tata Consultancy Services graduate school fellowship R(II)TCS-Re Schp/2012/2847. The second author's work
was partially supported by the DST grant SR/S4/MS: 683/10 DT. 31.12.2010.

\bibliographystyle{IEEEtran}
\bibliography{references}

\end{document}